\newtheoremstyle{dotless}{3pt}{3pt}{\itshape}{}{}{}{}{}
\titlespacing{\section}{0pc}{1.5pc}{1pc}
\newcommand{\ignore}[1]{}
\newcommand{\stitle}[1]{\vspace{1ex}\noindent{\bf #1}}
\newcommand{\sstitle}[1]{\vspace{1ex} \noindent{\underline{#1}}}
\newcommand{\ie}{\textit{i}.\textit{e}., }
\newcommand{\eg}{\textit{e}.\textit{g}., }
\newcommand{\etc}{\textit{etc}.}
\newcommand{\cf}{\textit{c}.\textit{f}.{ }}
\newtheorem{exmp}{\textbf{Example}}
\newtheorem{proposition}{\textbf{Proposition}}
\newtheorem{obs}{\textbf{Observation}}
\newtheorem{defn}{\textbf{Definition}}
\newcommand{\eat}[1]{}
\begin{document}
\begin{sloppypar}
\title{Finding Minimum Connected Subgraphs with Ontology Exploration on Large RDF Data}

\author{
\IEEEauthorblockN{Xiangnan Ren$^{\ast 1}$,  Neha Sengupta$^{\ast 2}$, Xuguang Ren$^{\ast 3}$, Junhu Wang$^{\dagger}$, Olivier Curé$^{\star}$} \\
\IEEEauthorblockA{
$^{\ast}$\small{Inception Institute of Artificial Intelligence, UAE} \\
\small{$^{\ast}$\{$^1$xiangnan.ren,$^2$neha.sengupta,$^3$xuguang.ren\}@inceptioniai.org}} \\
\IEEEauthorblockA{
$^{\dagger}$\small{Griffith University, Queensland, Australia} \\
\small{$^{\dagger}$j.wang@griffith.edu.au}} \\
\IEEEauthorblockA{
$^{\star}$\small{Université Paris-Est, Marne-la-Vallée, France} \\
\small{$^{\ast}$olivier.cure@u-pem.fr}}
\\\vspace*{-0.7cm}
}
% remove the page number
\thispagestyle{empty}
\pagestyle{empty} 
\maketitle

\begin{abstract}
In this paper, we study the following problem: given a knowledge graph (KG) and a set of input vertices (representing concepts or entities) and edge labels,  we aim to find the smallest connected subgraphs containing all of the inputs. This problem plays a key role in KG-based search engines and natural language question answering systems, and it is a natural extension of the Steiner tree problem, which is known to be NP-hard. We present RECON, a system for finding approximate answers. RECON aims at achieving high accuracy with instantaneous response (i.e., sub-second/millisecond delay) over KGs with hundreds of millions edges without resorting to expensive computational resources. Furthermore, when no answer exists due to disconnection between concepts and entities, RECON refines the input to a semantically similar one based on the ontology, and attempt to find answers with respect to the refined input. We conduct a comprehensive experimental evaluation of RECON. In particular we compare it with five existing approaches for finding approximate Steiner trees. Our experiments on four large real and synthetic KGs show that RECON significantly outperforms its competitors and incurs a much smaller memory footprint.
%
%produces more accurate answer sets with lower query latency (sub-second/millisecond) and requires a much smaller memory footprint than its direct competitors 
\end{abstract} 

\noindent{Keyword Search, Knowledge Graph, Ontology, Reasoning}

\section{Introduction}
\label{sec:introduction}

Knowledge graphs (KGs) have grown tremendously popular over the past decade. This is mainly due to their ability to compactly encode highly connected and diverse information \cite{DBLP:journals/tkde/Hu0YWZ18}. A KG represents entities and relationships between them via vertices and edges. Over the past few years, several very large KGs rich in real world information have emerged with examples including DBPedia~\cite{DBLP:conf/semweb/AuerBKLCI07}, Yago~\cite{DBLP:conf/cidr/MahdisoltaniBS15}, and ConceptNet~\cite{DBLP:journals/corr/SpeerCH16}, among others.
KGs are typically stored in Resource Description Framework (RDF) triple stores, \ie lists of (subject, predicate, object) triples, in which a predicate describes the relationship a subject entity has with an object entity \cite{ DBLP:journals/tkde/Hu0YWZ18}.

One of the primary applications of KGs is in search engines or question answering systems. These systems typically take a natural language question or keywords as input, then extract the corresponding entities and relations via natural language processing (NLP) techniques and map them into vertices and edge labels on a KG. The models thus search for small subgraphs covering all such vertices and edge labels for answer set retrieval \cite{DBLP:journals/vldb/ZouCOZ12,DBLP:journals/is/PengZQ17,DBLP:conf/sigmod/ZouHWYHZ14,DBLP:journals/tkde/Hu0YWZ18,DBLP:conf/icde/KasneciRSSW09}. 

On large KGs, finding such subgraphs is not an easy task. For greater relevance, the returned subgraphs should be minimal, or certain other criteria should be maximized~\cite{DBLP:conf/sigmod/HeWYY07,DBLP:journals/tkde/Hu0YWZ18}. To meet real-time response requirements, the system should produce an answer set with reasonable latency \cite{DBLP:conf/icde/YangAJTW19}. Ideally, the system should have some reasoning capability so that it can return subgraphs that contain semantically similar entities if there is no connected subgraph containing the original input vertices/edge labels.

% \vspace{-3mm}
\begin{figure}[h]
\begin{center}
\includegraphics[width=0.8\linewidth]{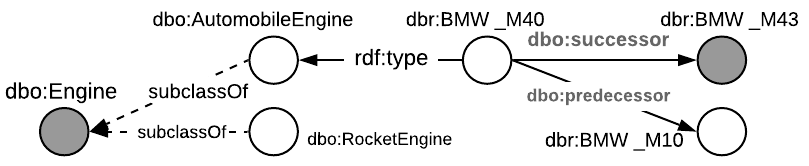} 
\captionof{figure}{An example of a knowledge graph for ontology-based reasoning. Dotted lines represent ontological relationships while solid lines represent instance-level relationships. }
\label{fig:reasoning_semantic}
\end{center}
\end{figure}
\vspace{-1mm}

In this paper, we focus on finding minimum connected subgraphs (MCSs) with ontological reasoning over RDF data. The problem can be formalized as a special case of keyword search on graphs: given a set of input keywords, where each keyword is mapped into either a unique vertex or an edge label, our aim is to return the minimum connected subgraphs which cover all the input keywords. If such an MCS does not exist, we aim to return the MCSs that cover semantically similar keywords, where semantic similarity is derived from the ontology. We use the following example to explain the effect of ontological reasoning. 

\begin{exmp}
Figure \ref{fig:reasoning_semantic} shows a KG with both ontological and instance-level relationships between entities. The input query \{\small{\texttt{dbr:BMW\_M43, dbo:Engine, dbo:successor, dbo:predecessor}}\} could be parsed from \{\small{\texttt{BMW M43, Engine, Successor, Predecessor}}\} by an NLP layer, intending to retrieve the predecessor of the engine which was succeeded by BMW M43, will return an empty result due to the disconnection between these vertices at an instance level. However, ontology-based reasoning enables \small{\texttt{dbo:Engine}} to be refined to \small{\texttt{dbo:AutomobileEngine}}, revealing that \small{\texttt{dbr:BMW\_M10}} is the predecessor of \small{\texttt{dbr:BMW\_M40}}. 
\end{exmp}

%\stitle{Challenges.}
The task of finding an MCS is an extension of the problem of finding a Steiner tree (ST), thus it is computationally intractable. \ignore{when the input keywords do not have edge labels, an MCS is reduced to a Steiner tree.} 
In fact, given a set of input keywords, to find an MCS, one can first find an ST with respect to the matched vertices, and then extend the ST to an MCS by adding edges that contain the input edge-labels (and which is our approach). Therefore our problem is closely related to algorithms for finding (approximate) STs. While there are algorithms for finding STs (e.g., \cite{DBLP:conf/icde/DingYWQZL07}, \cite{DBLP:conf/sigmod/LiQYM16}), they do not scale to very large graphs. Existing works on finding approximate STs, on the other hand, are either too slow or cannot return accurate results. We are not aware of systems that have ontological reasoning as discussed above.

\stitle{Our Approach and Contribution.} We present RECON, a framework for finding approximate MCSs with concept-level inference on KGs. RECON achieves high approximation quality in near real-time (milliseconds delay) on large KGs (hundreds of millions of triples), with moderate computational resources.
\begin{itemize}
\item To scale up to large graphs, we propose a novel light-weight sketch index for the computation of Steiner trees. This index can be constructed much more quickly and it takes much less space than those in previous work.
\item To retain high accuracy, we propose a method to ``patch-up" the sketch index with pruned landmark labeling index (PLL) \cite{DBLP:conf/sigmod/AkibaIY13} for shortest distance computation. 
\item We propose a method for conducting concept-level inference on KGs to deal with the problem of empty answers. Our method is based on a novel similarity measurement for keyword queries.
\item We conduct a comprehensive evaluation of RECON over real-world and synthetic datasets, and compare it against five existing systems for the computation of (approximate) STs, which lies at the core of our approach.
\end{itemize}

\stitle{Organization} The rest of this paper is organized as follows: Section \ref{sec:Preliminary} introduces preliminary concepts and formally defines the problem. In Section \ref{sec:solution-overview}, we provide an overview of the proposed solution. Section \ref{sec:sketch_construction} presents a fast procedure for sketch construction. Section \ref{sec:scg_construction} explains how to construct the smallest connected subgraphs. Section \ref{sec:ontologyExploration} describes the reasoning module in detail, and demonstrates how to transform an input query into SPARQL. Section \ref{sec:experiments} presents the experimental results. Section \ref{sec:relatedWork} outlines previous work in the literature. Finally, we conclude the paper in Section \ref{sec:conclusion}. 

\section{Preliminaries}
\label{sec:Preliminary}
A KG, or RDF graph, is composed of triples of the form ($subject, predicate, object$). Each triple can be regarded as a labelled edge from $subject$ to $object$ with $predicate$ being the edge label, and represents either a statement about entities or a description of the relationship between two concepts. Accordingly, a $KG$ can be divided into two parts: the entity description part and the ontology part. We formally define a KG using description logic \cite{DBLP:conf/dlog/2003handbook}.  

\begin{defn}[Knowledge Graph]
% \small
\label{def:kg}
A knowledge graph is an edge-labeled graph $KG= G \cup T$. The subgraphs $G$ and $T$ correspond to the ABox (assertional box) and the TBox (terminological box), respectively. An ABox contains extensional knowledge, \ie a set of facts, which are represented as triples of the following three forms: $C(\alpha)$ concept assertion (or type assertion), $R(\alpha, \beta)$ role assertion, and $P(\alpha, v)$ attribute assertion, where $C$ is an entity type (a.k.a. a concept), $\alpha$, $\beta$ are entities, and $v$ is the value of an entity's attribute. A TBox contains intensional knowledge and describes the general properties of concepts and relations. For the TBox, we focus on type inclusion assertions of the form $C_1 \sqsubseteq C_2$, where $C_1$, $C_2$ are concepts, and $\sqsubseteq$ denotes a concept subsumption.
\end{defn}

Intuitively, the ABox and TBox correspond to the entity description part and the ontology part, respectively. For example, in Figure \ref{fig:reasoning_semantic}, the edge ({\tt BMW\_M40, successor, BMW\_M43}) is a role assertion in the ABox, and ({\tt AutoMobileEngine, subclassOf, Engine}) belongs to the TBox. 

\begin{defn}[Keyword Query]
% \small
Let $KG = G\cup T$ be a KG. A keyword is a vertex in $KG$ (referred to as a {\em vertex keyword}) or an edge label in $G$ (referred to as an {\em edge label keyword}).
A keyword query is a set of keywords that includes at least one vertex keyword. We use $w_V$ and  $w_{EL}$ to denote the vertex keyword set and the edge label keyword set, respectively. 
\end{defn}

Note that in the general case,  a \emph{keyword} can be any text that matches one or more vertices or edge labels in the KG. In this work, however,  we focus on a special case where each {keyword} can map to only one vertex or edge label in the KG. Also, we assume a keyword query contains at least one vertex keyword. This is because an edge label usually maps to numerous edges in the KG. Therefore, a keyword set containing only edge labels is not very useful in practice.

%\vspace{-1mm}
\begin{defn}[MCS]
Let $KG$= $G\cup T$ be a KG, $w$ be a keyword query, and $G(w)$ be the set of all connected subgraphs of $G$ containing all elements of $w$. We say $g(w) \in G(w)$ is a {\it minimum connected subgraph (MCS)} for $w$ in $G$, if $g(w)$ has the smallest size (total number of edges and vertices) among all of the subgraphs in $G(w)$. 
\end{defn}

Note that we require the MCS to be a connected subgraph of the ABox rather than that of the entire KG. An MCS is an extension of {Steiner tree} (ST). 

\begin{defn} (Steiner Tree)
Given a set of vertices $V'$ in a graph, a Steiner tree (ST) (w.r.t $V'$) is the smallest tree in the graph that contains all the vertices in $V'$. 
\end{defn}

When the input keywords contain no edge label, the MCS becomes an ST. Finding an ST is known to be an NP-hard problem \cite{DBLP:journals/mp/ChopraR94}, therefore finding an MCS is also NP-hard. 

\begin{proposition}
% \small
\label{th:scg_np_hard}
For a given keyword query in a KG, finding an MCS is NP-hard.
\end{proposition}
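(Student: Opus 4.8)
The plan is to prove NP-hardness by a reduction from the Steiner tree problem, which the excerpt has already established is NP-hard. The key structural observation, stated in the text just before the proposition, is that an MCS degenerates exactly to a Steiner tree when the keyword query contains no edge-label keywords: if $w_{EL} = \emptyset$, then $G(w)$ is the set of connected subgraphs containing all the vertex keywords $w_V$, and a minimum-size such subgraph is necessarily a tree (any cycle could be broken by deleting an edge while preserving connectivity and coverage, strictly decreasing the size). So the MCS problem restricted to edge-label-free queries \emph{is} the Steiner tree problem. This makes the reduction essentially an instance-preserving embedding rather than a gadget construction.

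Concretely, I would set up the decision version of each problem: given a KG (or graph) and an integer $k$, does there exist a connected subgraph of size at most $k$ covering the input? Given an arbitrary Steiner tree instance consisting of a graph $G$ and a terminal set $V' \subseteq V(G)$, I would construct the keyword query $w$ with $w_V = V'$ and $w_{EL} = \emptyset$, and take the KG to be $G$ treated entirely as an ABox (so $T = \emptyset$). This construction is clearly polynomial-time. I would then argue the equivalence in both directions: a subgraph $g$ covers $w$ if and only if it is connected and contains every terminal in $V'$, and since the minimum such subgraph is a tree, the MCS for $w$ coincides with a Steiner tree for $V'$. Hence an MCS of size at most $k$ exists if and only if a Steiner tree of size at most $k$ exists, so solving MCS would solve Steiner tree.

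The only subtlety to handle carefully is the size metric: the MCS definition counts both edges and vertices, whereas the Steiner tree is usually measured by number of edges (or edge weight). Because a tree on $m$ edges always has exactly $m+1$ vertices, the two measures differ by an additive constant on trees, so minimizing one is equivalent to minimizing the other and the reduction goes through without distortion. I would also note the definitional requirement that a keyword query contain at least one vertex keyword; this is automatically satisfied whenever $V' \neq \emptyset$, which we may assume for any nontrivial Steiner instance.

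I do not expect any genuine obstacle here, since the result is a direct consequence of the fact that Steiner tree is a special case of MCS; the main thing to get right is stating the reduction cleanly and verifying that the edge-versus-edge-plus-vertex size discrepancy is harmless. The bulk of the work is simply formalizing the decision versions and confirming the polynomial-time, answer-preserving correspondence.
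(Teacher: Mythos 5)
Your proposal is correct and follows essentially the same route as the paper's own proof: reduce Steiner tree to MCS by taking the terminals as vertex keywords with no edge-label keywords, and observe that the minimum connected subgraph covering them must be a tree (otherwise a spanning tree would be smaller). Your additional care about the decision-problem formalization and the edges-versus-edges-plus-vertices size metric is a harmless refinement of the same argument.
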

\begin{proof} 
We can easily reduce the ST problem to the MCS  problem. 
Given a set of vertices, we can treat them as the vertex keywords and find an MCS. This MCS must be a tree since otherwise its spanning tree will be smaller and contains all keywords. By definition of an MCS, it must be an ST. 
\end{proof}

\begin{defn}[Dangling Edge Label]
% \small
Let $G$ be the ABox of a KG and $w = w_V \cup w_{EL}$ be a keyword query. Let $Tr$ be a tree in $G$ that contains all the vertices in $w_V$.  For each edge label $el \in w_{EL}$, we say $el$ is {\it covered} by $Tr$ if $el$ appears in $Tr$. The edge labels in $w_{EL}$ that are not covered by $Tr$ are called {\em dangling edge labels} w.r.t $Tr$.
\end{defn}

\begin{exmp}
% \small
Let $w=$\{\texttt{BMW\_M43, successor, predecessor, AutomobileEngine}\} be a keyword query for the graph shown in Figure~\ref{fig:scg_ex1} (a). Figure~\ref{fig:scg_ex1} (b) shows a non-MCS graph that contains all keywords in $w$. Figure~\ref{fig:scg_ex1} (c) shows an MCS, where \texttt{predecessor} is a dangling edge label w.r.t. the tree $<\texttt{BMW\_M43, BMW\_M40, AutomobileEngine}>$.
\end{exmp} 

\vspace{-2mm}
\begin{figure}[h]
\begin{center}
\includegraphics[width=1\linewidth]{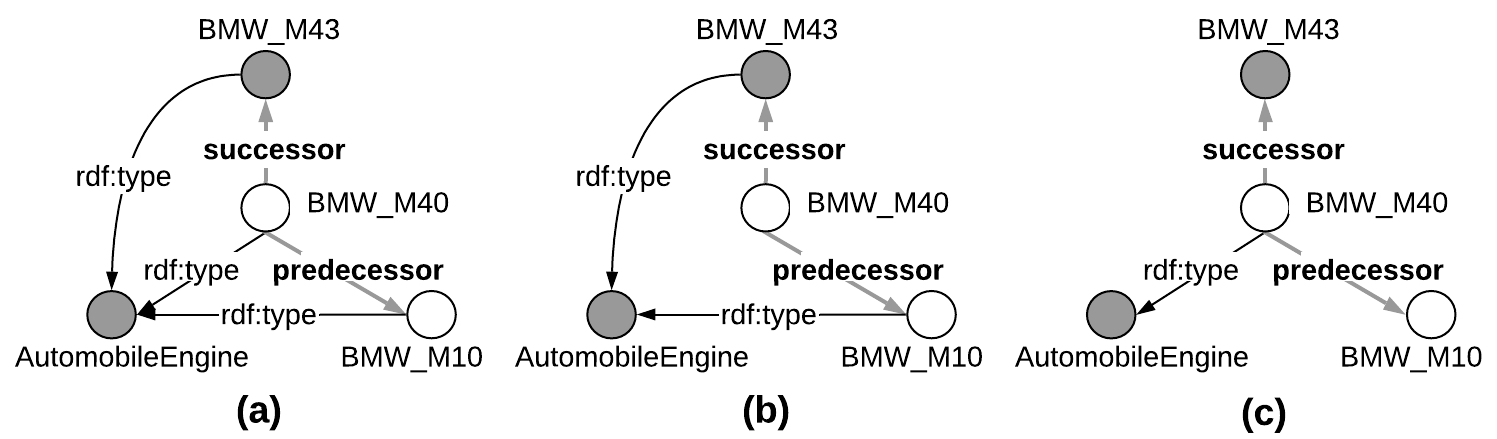} 
\captionof{figure}{Example MCS. Vertices and edge labels marked in gray are the mentioned keywords: (a) $G$ - the ABox; (b) a subgraph of $G$; (c) an MCS of $G$.}
\label{fig:scg_ex1}
\end{center}
\end{figure}

Note that for a keyword set $w$, an MCS may not even exist. In this case, refining the keywords of $w$ to subtypes may generate a relevant answer to the query (see Figure \ref{fig:reasoning_semantic} for an example). As outlined in later sections, the goal of reasoning is to progressively refine the keywords of the query until an MCS can be retrieved. This allows us to return results that are  as relevant to the query as possible.

\stitle{Problem Statement.} Given a keyword query over a KG, our goal is to retrieve all the MCSs that contain either all of the given keywords or, if such an MCS does not exist, those that contain semantically similar keywords.

\vspace{-1mm}
\subsection{Sketch-Based Index}
\label{sec:landmark-based-index}
Our approach to building the MCS involves using a \textit{sketch-based index}.
The sketch-based, or landmark-based index, is widely used for distance query or ST approximation in large graphs \cite{DBLP:conf/cikm/PotamiasBCG09,DBLP:journals/jal/CowenW04,DBLP:conf/wsdm/SarmaGNP10,DBLP:conf/cikm/GubichevN12,DBLP:conf/sigmod/AkibaIY13,DBLP:conf/cikm/TretyakovAGVD11}. It commonly includes two stages: (S1) Select a set of landmark vertices, and precompute the shortest path from each vertex to the landmarks. To do so, a breath-first search (BFS) is applied from the selected landmarks to $v$, and the explored paths are stored in $sk(v)$ (\ie the \emph{sketch} of $v$, which is a tree rooted at vertex $v$ with the shortest paths reaching the landmarks). (S2) Combine the distance or path information obtained in (S1). Using techniques such as triangle inequality and shortcutting, the time complexity of the shortest path approximation can be reduced to $\mathcal{O}(S)$, where $S$ is the size of the landmarks \cite{DBLP:conf/cikm/GubichevBSW10}.

\stitle{Limitations.} The existing approaches for landmark selection require a series of BFS procedures over the entire graph to obtain the path information from each vertex to its landmarks. For example, in \cite{DBLP:conf/cikm/GubichevBSW10,DBLP:conf/wsdm/SarmaGNP10,DBLP:conf/cikm/GubichevN12}, the landmark sets are built by uniformly sampling at random $N_S = log|V|$ seed sets of vertices $S_1,S_2,...,S_{N_S}$, with sizes $1, 2,...2^{N_S - 1}$, respectively. The authors compute the distance from each non-seed vertex to its seed vertex and keep the closest one as the landmark. This process of landmark selection and path computation is repeated $k$ times. Although such an algorithm approximates the shortest path well, its $\mathcal{O}(k|V|(|V|+|E|))$ time cost prohibits it from scaling to large graphs. To overcome this drawback, we propose a new method for the fast construction of a light-weight sketch index in Section \ref{sec:sketch_construction}. %and \ref{sec:scg_construction}. 

\subsection{Pruned Landmark Labeling (PLL)}
\label{sec:PLL}
The pruned landmark labelling index is a two-hop labeling index for the fast computation of shortest distances/paths \cite{DBLP:conf/sigmod/AkibaIY13}. However, PLL does not provide alternative shortest paths. Therefore, it is not suitable for searching STs or MCSs on its own. Further, with large dense graphs, the PLL index is time-consuming to construct, and it takes too much space. Therefore, we combine an {\em $r$-restricted} version of PLL with the sketch index in our search for STs. The $r$-restricted version of PLL is obtained by limiting the BFS depth to a small constant $r$ during index construction. Note that such a PLL index can still find the correct shortest distance between two nodes if their distance is  no more than $r$.               

\section{Solution Overview}
\label{sec:solution-overview}

 Algorithm \ref{algo:solution-overview} outlines the generic framework of RECON, which consists of two parts, \ie offline preprocessing and online computation.
 
%  \vspace{-2mm}
% \begin{figure}[h]
% \begin{center}
% \includegraphics[width=0.9\linewidth]{}
% \captionof{figure}{Generic framework of RECON.}
% \label{fig:recon-pipelinee}
% \end{center}
% \end{figure}
% % \vspace{-2mm}

\vspace{-2mm}
\begin{algorithm}[ht!]
\small
\caption{Generic Framework}
\label{algo:solution-overview}
\KwIn{Knowledge graph $KG = G \cup T$, keyword set $w$}
\KwOut{Answer for $w$ on $KG$} 
$SK \gets $ generateSketch($G$) \\
$PLL \gets $ generatePLL($G$)   \\
%Refine $w$ using $T$ \\
$ST(w) \gets $ generateST($w, SK, PLL$)   \\
\If{$ST(w) = \emptyset$}{
  Refine $w$ using $T$ \\
  $ST(w) \gets $ generateST($w$, $SK$, $PLL$, $T$)
} 
$MCS(w) \gets $ generateMCS($ST(w), G$)  \\
$q^w \gets$ generateSPARQL($MCS(w)$)  \\
$ans \gets$ execute $q^{w}$ in triple store         \\
{Reformat} $ans$ as MCSs \\
\Return $ans$ 
\end{algorithm}
\vspace{-3mm}
 
 \stitle{Offline Preprocessing.} During offline preprocessing (lines 1 to 2), we construct the sketch index and PLL index as described in Sections~\ref{sec:landmark-based-index} and \ref{sec:PLL}. These indexes will be used to build an approximate ST w.r.t the input vertex keywords during the online computation stage.

\stitle{Online Computation.} 
We first try to construct an approximate ST with respect to the vertex keywords using the sketch and PLL indexes (line 3). We stress here that the edge labels in the keyword set $w$ are also considered during the ST construction so as to minimize the number of dangling edge labels. If an ST does not exist due to disconnection between the vertices, we will refine the keywords using the ontology and try to generate the STs again (lines 4-6). Based on the approximate STs generated, we construct the approximate MCSs by handling dangling edge labels (line 7).  Since an MCS naturally represents the algebra of a SPARQL query, we can generate a SPARQL query from the MCSs \cite{sparqlProtocol} (line 8). Then we execute the SPARQL query in a triple store and reformat the query results to produce more MCSs (line 9-11). The advantage of using SPARQL is that we can profit from the highly optimized triple store for faster answer retrieval, compared with repeated search over the KG. Note that the result reformatting is done by appending the output tuples of the SPARQL query to an obtained MCS pattern. Each tuple produces an MCS, which is flushed immediately into the disk on-the-fly with negligible cost in time and memory storage.

\section{Sketch Construction}
\label{sec:sketch_construction}

To enhance the scalability to large graphs, we propose a novel approach for landmark selection and sketch construction in Algorithm \ref{algo:sketch-build}.

\vspace{-2mm}
\begin{algorithm}[ht!]
\small
\caption{Offline Sketch Construction}
\label{algo:sketch-build}
\KwIn{$G = (V,E)$, radius $r$, integer $k$}
\KwOut{$\{sk(v)\}, \forall v \in V$} 

\For{i=1 \text{to} k}{
Mark all nodes as unvisited \\
\While{$\exists v \in V$ which is unvisited}{
$l \gets $ selectLandmark($V$), $l$ is unvisited \& unused \\
do BFS from $l$ within radius $r$\\
\For{each node $v$ visited during BFS}{
%$sp_{l,v} \gets$ BFS from $l$ within radius $r$ \\
/* add shortest path to sketch */\\
$sk(v) \gets sk(v) \cup sp_{l,v}$ \\
mark $v$ as visited \\
}
mark $l$ as used
}
}
\Return $\{sk(v)\}, v \in V$
\end{algorithm}
\vspace{-3mm}
\stitle{\color{red}} As shown in Algorithm~\ref{algo:sketch-build}, we first select a landmark $l$ using the function selectLandmark($V$). As will be explained shortly, the function  prioritizes vertices with high \emph{informativeness} (see Definition \ref{def:selectivity}).\ignore{a measure of the diversity of information stored at this vertex)} Then, we perform BFS from the selected landmark $l$ within a radius $r$ (\ie we go at most $r$ hops from $l$) and update the sketch of all reached vertices (lines 5 to 9). Note that we do not traverse the entire graph from $l$. Once all the $r$-hop neighbors of $l$ have been visited, we pick another unused landmark $l'$ that has not yet been visited, and we continue this process until the whole graph has been visited. This procedure ensures that every vertex has exactly one landmark. We repeat this process for $k$ times, where $k \ll |V|$. Compared with existing approaches like \cite{DBLP:conf/cikm/PotamiasBCG09,DBLP:journals/jal/CowenW04,DBLP:conf/wsdm/SarmaGNP10,DBLP:conf/cikm/GubichevN12,DBLP:conf/sigmod/AkibaIY13,DBLP:conf/cikm/TretyakovAGVD11}, the time complexity of sketch construction is reduced from $\mathcal{O}(k|V|(|V| + |E|))$ to $\mathcal{O}(k(|V| + |E|))$, where $k$ refers to the maximum number of landmarks for $sk(v)$. Note that we will not choose a landmark that has been used in previous rounds (line 4). It is not hard to see that a larger $k$ means a higher probability that two vertices will share a common landmark. In our experiments, we use $k = log|V|$. Thus, the space complexity for sketch storage is reduced to $\mathcal{O}(|V|log|V|)$, considering $r$ as a constant $\ll|V|$. 

\stitle{Landmark Selection.} Instead of uniformly selecting landmarks at random, we select them by weighted reservoir sampling (A-Res algorithm \cite{DBLP:journals/ipl/EfraimidisS06}). Inspired by \cite{DBLP:conf/icde/YangAJTW19}, we assign each vertex a weight or score, called informativeness, and choose the node with the highest informativeness from the unvisited and unused nodes as the next landmark.

\begin{defn}[Informativeness]
\label{def:selectivity}
% \small
$\forall v \in V$, let $EL(v)$ be the set of unique edge labels incident on $v$. We call
\begin{small}
\begin{align*}
 I(v) =  log|EL(v)| * log(deg(v))
\end{align*}
\end{small}
the informativeness of $v$, where $deg(v)$ is the degree of $v$. We assign $I(v)$ as the weight of $v$ which determines its probability of being selected as a landmark.
\end{defn}
\vspace{-2mm}

Intuitively, the informativeness is determined by the number of distinct edge labels incident on the node as well as the degree. It is possible to have $deg(v) \gg |EL(v)|$, so we normalize the degree and the number of edge labels by $log$ scale\ignore{\cite{DBLP:journals/jis/FernandezMRG18}}.

There are two reasons for introducing $I(v)$: (i) the conventional way to select landmarks is to choose high-degree vertices, since high-degree vertices means high centrality, and choosing such vertices as landmarks is likely to produce good shortest path estimation. However, for keyword search we would like the vertices connected by the landmarks to be meaningfully related, and while vertices connected by very high-degree nodes may be completely unrelated to each other. For instance,  vertices connected by the high-degree vertex {\tt owl:Thing}. Therefore, using degree alone is not the best strategy for our problem.  
(ii) Our system involves the search of edge labels, so we aim to retain the edge label diversity in the sketches. 

\stitle{Sketch Balancing.} Due to the randomness of the weighted sampling process, it is possible for the distribution of the path information from different categories  of assertions (\ie role, concept and attribute) to be unbalanced. In the worst case, $sk(v)$ may only contain the literal assertions. Therefore, $v$ to any other entities or type vertices becomes unreachable in the sketches. To address this issue, we split the graph $G$ into three subgraphs based on Definition \ref{def:kg}, \ie $G = G_{R \mapsto R} \cup G_{R \mapsto T} \cup G_{R \mapsto L}$ ($G$ is equal to the union of role assertions ($G_{R \mapsto R}$), type assertions ($G_{R \mapsto T}$), and attribute assertions $(G_{R \mapsto L})$). Finally, we perform the sketch construction on each subgraph independently and aggregate the results together. 

\stitle{Weakness of Sketches.} As mentioned previously, MCS construction relies on the shortest path estimation on sketches. However, we have opted for a trade-off between algorithm scalability and sketch quality, leading to the following observations.

% \vspace{-1mm}
\begin{figure}[h]
\begin{center}
\includegraphics[width=0.9\linewidth]{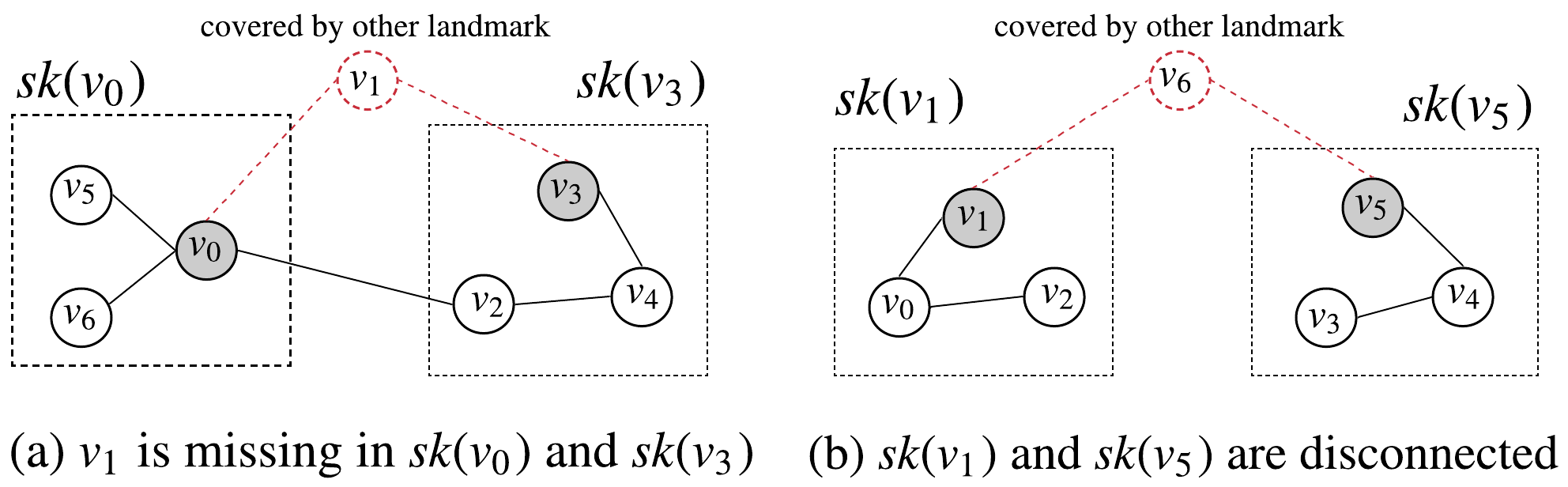}
\captionof{figure}{Examples of inaccurate shortest path estimation.}
\label{fig:failed-sketch}
\end{center}
\end{figure}
% \vspace{-2mm}

\begin{obs}
% \small
\vspace{-1mm}
\label{sketch_low_quality}
Sketches generated by Algorithm \ref{algo:sketch-build} may yield inaccurate shortest path estimation. This is because two sketches $sk(v)$ and $sk(v')$ might not share any intersection, or the shortest path between $v$ and $v'$ cannot be found using their sketches.
\end{obs}

\begin{exmp}
Sketches shown in Figure \ref{fig:failed-sketch} can occur due to the fact that we do not visit vertices already covered by other landmarks. In Figure \ref{fig:failed-sketch} (a), the path between $v_0$ and $v_3$ obtained using the two sketches is $(v_0, v_2, v_4, v_3)$, which is of length 3. However, a shorter path $(v_0, v_1, v_3)$ of length 2 exists in the original graph. In Figure \ref{fig:failed-sketch} (b), $v_6$ appears in neither $sk(v_1)$ nor $sk(v_5)$, since $sk(v_1)$ and $sk(v_5)$ are disconnected. Therefore, the path $(v_1,v_6,v_5)$ cannot be found using the sketches.
\end{exmp}

Moreover, since we limit the radius of the BFS to $r$, if two vertices have a distance larger than $2r$, then we cannot accurately estimate the distance using the sketches.

To address the above-mentioned weaknesses, we build a PLL index to iteratively add shortest paths to the sketches on-the-fly. We call this process sketch {\em patch-up}. The details will be given in the next section.

\section{MCS Construction}
\label{sec:scg_construction}
We focus on the construction of the initial MCSs (line 3, 7, Alg.~1) in this section. The process consists of the following three steps: (i) we patch up the sketches using PLL; (ii) we construct an approximate ST using the patched-up sketches, choosing shortest paths containing more input edge labels whenever possible; (iii) we generate the approximate MCS by inserting edges to cover dangling edge labels. 

Before explaining the sketch patch-up, we define the concept of {\em vertex occurrence}.

\begin{defn}[Vertex Occurrence]
\label{def:occ}
% \small
Given a vertex $v_i$ and the sketches of keyword vertices $SK=\{sk(t_1),...,sk(t_n)\}$, the occurrence of $v_i$ in $SK$, denoted  $occ_i$,  is the number of sketches in $SK$ that contains $v_i$, that is, $occ_i=|\{ t_j ~|~ v_i\in sk(t_j) \}|$.
\end{defn}
% \vspace{-2mm}

We use the \emph{vertex occurrence} to approximate the central vertex (or Jordan center \cite{DBLP:books/cu/WF1994}) in our algorithm. The Jordan center of a graph $J(G)$ is defined as the set of vertices that have the smallest eccentricity, \ie $\forall v \in J(G)$, the largest distance from $v$ to other vertices of $G$ is minimal. 
Our motivation is as follows:
\begin{obs}
\label{obs:occ}
During the ST construction, the higher the occurrence of the vertex, the more likely it will appear in the ST.
\end{obs}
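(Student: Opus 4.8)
The plan is to justify Observation~\ref{obs:occ} by tying the purely combinatorial quantity $occ_i$ to two well-understood notions: the role of $v_i$ as a shared \emph{merge point} in the sketch-based reconstruction of shortest paths, and the centrality of $v_i$ relative to the keyword vertices. Since the statement is probabilistic (``more likely''), I would first fix a precise event to reason about. The most natural choice is to adopt the algorithm's own ST construction as the ground truth: the approximate ST is assembled by connecting the keyword vertices to one another through shortest paths recovered from the sketches, and $v_i$ ``appears in the ST'' exactly when it lies on one of these recovered paths. With this definition both $occ_i$ and the event $\{v_i \in ST\}$ become functions of the same landmark sample, so the claim reduces to a monotonicity statement in $occ_i$.

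The first step is to unpack what a single unit of occurrence buys. By Definition~\ref{def:occ}, $occ_i = k$ means there are $k$ keyword vertices $t_{j_1},\dots,t_{j_k}$ each having $v_i$ on a shortest path to \emph{some} landmark, i.e.\ $d(t_{j_m},\lambda) = d(t_{j_m}, v_i) + d(v_i, \lambda)$ for a landmark $\lambda$. The key structural fact I would establish is the standard sketch-merging lemma: whenever $v_i$ lies in both $sk(t_a)$ and $sk(t_b)$ on their paths toward a common landmark, the concatenation $t_a \rightsquigarrow v_i \rightsquigarrow t_b$ is a valid (and, in the sketch estimate, preferred) connecting path of length $d(t_a,v_i)+d(v_i,t_b)$. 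Consequently a vertex with occurrence $k$ simultaneously serves as a candidate merge point for up to $\binom{k}{2}$ keyword pairs, whereas a vertex with occurrence $1$ connects none. Since the construction greedily prefers vertices that merge the most keyword branches --- which is precisely the purpose of using occurrence to approximate the Jordan center --- the chance that $v_i$ is chosen as a connecting hub, and hence lands in the ST, grows monotonically with $occ_i$.

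The second step is to make the centrality link explicit, which both strengthens the argument and exposes the main obstacle. I would observe that $occ_i$ is a restricted, landmark-sampled form of betweenness centrality whose sources are the keyword vertices and whose sinks are the landmarks; under the mild assumption that the landmark set spans the graph well enough that, for each keyword vertex, some landmark lies roughly in the direction of the remaining keyword vertices, a shortest path from $t_j$ toward such a landmark agrees on its initial segment with the shortest path from $t_j$ toward the keyword centroid. Hence high occurrence forces $v_i$ to have small summed distance to the keyword vertices, i.e.\ to sit near their Jordan center, and central vertices are exactly the meeting points through which any small connecting subgraph must route.

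The main obstacle is precisely this last bridge: occurrence is defined through shortest paths to \emph{landmarks}, which are external auxiliary vertices, whereas the ST connects the \emph{keyword vertices} to one another. The inference ``on a shortest path to a landmark $\Rightarrow$ on a shortest path between keyword vertices'' is not valid in general and must be controlled by a hypothesis on landmark placement (spread and coverage) together with a quantitative bound on how far a sketch-merge path can deviate from the true shortest path. I therefore expect the honest version of this observation to be a conditional, probabilistic monotonicity claim rather than a deterministic theorem, with the deviation bound and the landmark-coverage hypothesis carrying the real weight; everything else reduces to the routine sketch-merging bookkeeping sketched above.
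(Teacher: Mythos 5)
The paper offers no proof of this statement: it is labeled an \emph{Observation} and serves purely as a heuristic motivation for using the occurrence count to approximate the Jordan center of the keyword vertices, stated immediately after Definition~\ref{def:occ} and never argued for. There is therefore nothing in the paper to compare your argument against, and your instinct that the claim is an informal, probabilistic design heuristic rather than a provable theorem is exactly right.

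That said, your proposal should not be mistaken for a proof either, and you are candid about this. The two load-bearing steps remain unestablished. First, the claim that a vertex with occurrence $k$ serves as a merge point for up to $\binom{k}{2}$ keyword pairs requires that the relevant shortest paths pass through $v_i$ toward a \emph{common} landmark and that the concatenation $t_a \rightsquigarrow v_i \rightsquigarrow t_b$ is competitive in length with the true shortest $t_a$--$t_b$ path; neither follows from Definition~\ref{def:occ}, which only counts membership of $v_i$ in each sketch separately. Second, the bridge from ``$v_i$ lies on a shortest path from $t_j$ to some landmark'' to ``$v_i$ has small summed distance to the keyword vertices'' rests on a landmark-coverage hypothesis that you state but do not verify, and which the paper's sketch construction (bounded-depth BFS from a fixed landmark selection, further perturbed by the KK and CK patch-ups) does not guarantee. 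What you have written is a correct diagnosis of why the observation is plausible and of where a rigorous version would have to do its work --- which is more than the paper attempts --- but the statement remains, both in the paper and in your write-up, an empirical heuristic whose justification is ultimately the experimental reduction in approximation error reported in Section~\ref{sec:experiments}, not a deduction.
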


%\vspace{3mm}
\subsection{Sketch Patch-Up}
\label{subsect:sketch-patch-up}

We use two types of patch-ups: keyword-keyword patch-up and central vertex keyword patch up, as explained below.

\stitle{Keyword-Keyword (KK) Patch-Up.} KK patch-up inserts shortest paths from one keyword vertex to other keyword vertices. The process is described in Algorithm \ref{algo:sketch_init} (lines 2-10). We start with the loading of sketches. Lines 2-4 initialize the $SearchState_i$ for each keyword $t_i \in w_V$. During BFS, the $SearchState_i$ maintains: $G_{sk}(t_i)$, the graph generated from $sk(t_i)$, $G_{sk}(t_i)$, which is progressively updated during MCS construction; the queue of BFS for $t_i$; the visited vertices $F_i$ (an ordered hash set). After that, we extract the shortest paths from the PLL index between all vertex keyword pairs and insert them into the sketches (lines 6 to 10). \ignore{Figure \ref{fig:sketch-patch-up} II gives an example of path insertion in KK patch-up.}

\begin{exmp}
Consider the input vertex keywords $t_0$, $t_1$, $t_2$ and the corresponding sketches $sk(t_0)$, $sk(t_1)$, $sk(t_2)$ shown in Figure \ref{fig:sketch-patch-up}.I. Figure \ref{fig:sketch-patch-up}.II gives an example of path insertion in KK patch-up, \ie edges $e_{0, 14}$, $e_{14, 1}$ and $e_{6,2}$ are inserted into $G_{sk}(t_0)$. Figure \ref{fig:sketch-patch-up}.III.(a) shows the union of sketches after KK patch-up, $v_{10}$, $e_{2,10},e_{10,1},e_{0,14}$ are inserted (we assume the paths $(t_0,v_{14},t_1), (t_0,v_6,t_2), (t_1,v_{10},t_2)$ are found using the PLL index between the keyword vertices $t_0,t_1$ and $t_2$).
\end{exmp}

KK patch-up alone is not sufficient for handling the types of failures illustrated in Figure \ref{fig:failed-sketch}. This is mainly due to the infeasibility of PLL computation in large dense graphs. For example, in DBpedia, when the number of hops for BFS in PLL reaches three and the procedure iterates over only 270 vertices, the index size exceeds the memory limit (60G). As such, we set the number of hops $r$ of BFS in PLL construction to be the same as for sketch construction.

\begin{figure}[h]
\vspace{-1mm}
\begin{center}
\includegraphics[width=0.75\linewidth]{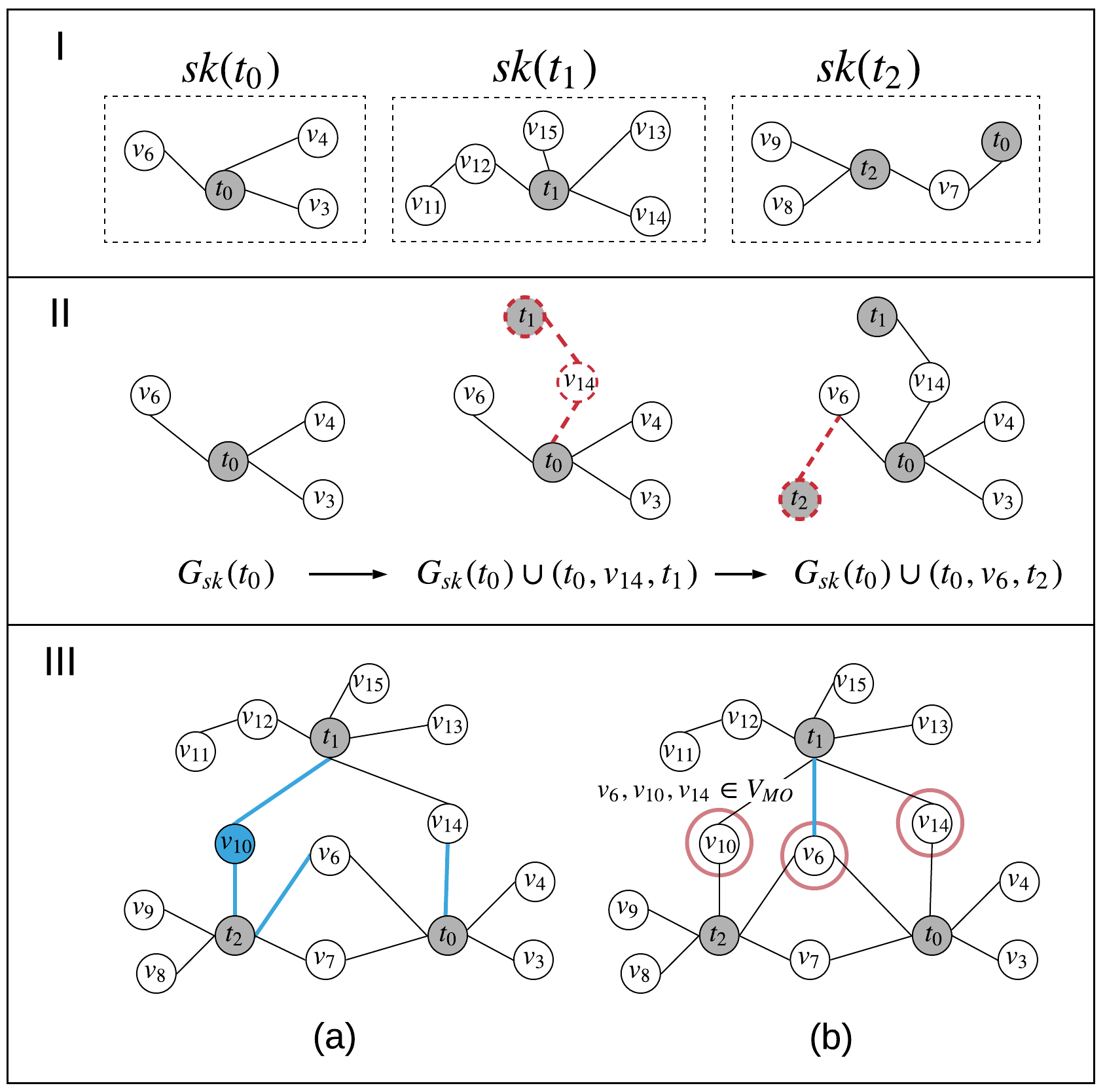} 
\captionof{figure}{Sketch patch-up example. I. Input sketches $sk(t_0)$, $sk(t_1)$ and $sk(t_2)$. II. Path insertion of KK patch-up for $G_{sk}(t_0)$. III. (a) Union of sketches after KK patch-up; (b) union of sketches after CK patch-up. Blue vertices/edges are inserted after patch-up. 
} 
\label{fig:sketch-patch-up}
\vspace{-2mm}
\end{center}
\end{figure}

\stitle{Central Vertex-Keyword (CK) Patch-Up.} CK patch-up inserts the paths from non-keyword vertices with higher occurrences in sketches into the keyword vertices. A vertex with higher occurrence is more likely to become the center of an ST, since such a vertex is expected to minimize the largest distance to the keyword vertices. We use CK patch-up to reveal the shortest paths from a non-keyword vertex that exists at the intersection of the sketches to the vertex keywords. For example, CK patch-up may be used to discover the shortest path from a vertex $v_k$ that has high occurrence in the sketches to keyword vertices $t_i$ and $t_j$. Thereafter, the shortest path between $t_i$ and $t_j$, $sp_{t_i,t_j}$, can be estimated by the concatenation of $sp_{t_i, t_k}$ and $sp_{t_k, t_j}$ (\ie $sp_{t_i, t_k} \oplus sp_{t_k, t_j}$).

CK patch-up is described in lines 11-20 of Algorithm \ref{algo:sketch_init}. We track the occurrence of each vertex in the union of sketches by a map $\mathcal{M}_{occ}$ ($\mathcal{M}_{occ} = \{(v_i, occ_i) | v_i \in SK \}$), and the vertices with highest occurrence (denoted by $V_{MO}, \forall v \in V_{MO}, v \notin w_V$) are extracted by \emph{getMaxOccVertices}. We repeat the CK patch-up procedure until the convergence condition is achieved on the $\tau$-th ($\tau \geq 1$) iteration (Algorithm \ref{algo:sketch_init}, lines 14 to 20):

\begin{small}
\vspace{-3mm}
\begin{align}
\label{cond:iter_patch}
\begin{cases} 
\exists v_m \in V_{MO}, occ_m^{(\tau)} = |w_V| \textbf{, or } \\ 
\forall v_m \in V_{MO}, occ_m^{(\tau)} - occ_m^{(\tau - 1)} = 0 .
\end{cases}
\end{align}
\vspace{-3mm}
\end{small}

Condition (\ref{cond:iter_patch}) illustrates that CK patch-up is terminated either when $\exists v \in V_{MO}, v$ appears in all sketches, or $\forall v \in V_{MO}$, where the occurrence of $v$ is unchanged at the $\tau$-th iteration. As observed in our experiments, the number of CK patch-up iterations is small (typically $\leq 3$). 

\begin{exmp}
% \small 
In Figure \ref{fig:sketch-patch-up}.III.(b), after the KK patch-up, $v_{6}$ appears in both $sk(t_0)$ and $sk(t_2)$, \ie $occ_6 = 2$. Analogously, we have $occ_{10}$ and $occ_{14}$ equal to 2. Then, we apply CK patch-up to $v_6$, $v_{10}$ and $v_{14}$, and $e_{6,1}$ is revealed. Finally, $v_6$ appears in all the sketches (\ie $occ_6 = 3$), and the termination condition (\ref{cond:iter_patch}) is achieved.
\end{exmp}

\vspace{-4mm}
\begin{algorithm}[ht!]
\small
\caption{Sketch Patch-up}
\label{algo:sketch_init}
\KwIn{vertex keyword set $w_V = \{t_1,...,t_k\}$, $G = (V,E)$}
\KwOut{Graph generated from sketches of $w_V$}

Load sketches $sk(t_1)$,...,$sk(t_k)$       \;

\ForEach{$t_i \in w_V$ }{
    $G_{sk}(t_i) \gets$ $sk(t_i)$  \\
    Initialize  $SearchState_i$    \\
    }
\text{/* Keyword-Keyword Patch-up */} \\
\For{$i \gets 1$ to $k - 1$}{
    \For{$j \gets i + 1$ to $k$}{
        $sp_{i,j} \gets$ retrieveShortestPathByPLL($t_i$, $t_j$)               \\
        $G_{sk}(t_i) \gets G_{sk}(t_i) \cup  sp_{i,j}$      \\
        $G_{sk}(t_j) \gets G_{sk}(t_j) \cup  sp_{j,i}$      \\
    }
}

\text{/* Central-Keyword Patch-up */} \\
$\mathcal{M}_{occ} \gets$ build from  $G_{sk}(t_1),...,G_{sk}(t_k)$ \\
$V_{MO} \gets$ getMaxOccVertices($\mathcal{M}_{occ}$)

$\tau \gets 1$; $occ_{m}^{1} \gets occ_m;~ occ_{m}^{0} \gets 0, ~ \forall v_m\in V_{MO}$  \\
%While{$\exists v_m \in V_{MO}, \delta_{m}^{(\tau)} > 0$}{
\While{Condition (\ref{cond:iter_patch}) is not satisfied}{
\ForEach{$v_m \in V_{MO}$ }{
    \ForEach{$t_i \in w_V$ }{
     $sp_{i,m} \gets$ retrieveShortestPathByPLL($t_i$, $v_m$)      \\
     $G_{sk}(t_i) \gets G_{sk}(t_i) \cup  sp_{i,m}$         \\ 
     update $\mathcal{M}_{occ}$ 
     }
  }
$V_{MO} \gets$ getMaxOccVertices($\mathcal{M}_{occ}$)  \\
$\tau \gets \tau + 1$
}
\Return $G_{sk}(t_1),...,G_{sk}(t_k)$
\end{algorithm}
\vspace{-2mm}

\subsection{ST Construction}
\label{sec:scg}

For the construction of the MCS we begin by building the ST of $w_V$, which requires a greedy local search algorithm. We start from a keyword vertex $t_i$, conduct BFS locally on the sketches of an keyword vertices, and try to explore the path from $t_i$ to $t_j, t_j \in w_V, i \neq j$. Then we switch to another keyword vertex $t_i'$ through round-robin scheduling and repeat the previous step until the ST is successfully constructed. This local search relies on the following strategies: (i) When multiple shortest paths are discovered from $t_i$ to $t_j$, we choose the one with the highest total occurrence. (ii) The search order between keyword vertices is determined by the \emph{average distance} to the keyword vertices.

Algorithm \ref{algo:steiner_tree_construction} describes the ST construction process. Lines 1 and 2 determine the search order at the initialization stage. We find that the initial search order impacts the compactness of the resulting ST. We start expanding the sketches from the keyword vertex with the smallest degree. The intuition is that, in the greedy BFS search, a vertex with a lower degree is more likely to hit a central vertex earlier (\ie by searching fewer vertices). In Figure \ref{fig:steiner_tree_construction} (a), the search order is determined to be $t_0$, $t_2$, $t_1$.

Before iteration over each $SearchState$, line 3 initializes: an empty ST $G_{st}$; a set $S_{cover}$ which is used to record the keyword vertices covered by $G_{st}$; and the set of dangling edge labels $w_{EL}^d$ (at the beginning, $\forall l \in w_{EL}$, $l$ is a dangling edge label).

\stitle{In each round-robin iteration}, the procedure starts by initializing path map $\mathcal{MP}$ and $V_{cl}$ vertices in the current level (level $l$ stores the $l-$hop neighbors of $t_i$) to be expanded (lines 6 to 8). $\mathcal{MP}$ is a map keyed by keyword vertex pair $<t_i, t_h>, \forall t_i, t_h \in w_V$, which stores all the paths discovered between $t_i$ and $t_h$.

Instead of conducting BFS vertex by vertex as in \cite{DBLP:conf/cikm/GubichevN12,DBLP:conf/icde/KasneciRSSW09}, line 9 expands the BFS in $G_{sk}(t_i), t_i \in w_V$ to process all vertices at the same level (\ie the vertices having the same distance to the source vertex in the BFS procedure). Therefore, multiple shortest paths might be revealed from $t_i$ to $t_h$. In the later process, we apply the \emph{path selection} strategy to decide which path should be kept for ST construction. The algorithm iterates over the $SearchState_h, i \neq h$, to explore the possible paths from $t_i$ to $t_h$ (line 10). In line 11, we verify whether the current vertex $v_j$ and its neighbors are connected to any other already visited vertices by a lookup in the original graph via SPO permutation index \cite{DBLP:journals/pvldb/WeissKB08,DBLP:journals/pvldb/NeumannW08}. If it is the case, the method \emph{getPathFromSketch} extracts paths $sp_{i, j}$ and $sp_{j,h}$ from $G_{sk}(t_i)$ and $G_{sk}(t_h)$, respectively, and concatenates them together (line 12),  which is similar to \cite{DBLP:conf/cikm/GubichevN12}.

The method \emph{getPathsFromMP} extracts all the stored paths between $t_i$ and $t_h$ from $\mathcal{MP}$ and assigns them to $P$ (line 13). If $sp_{i,h}$ (\ie the current revealed path) is shorter than any previously discovered longer path in $P$, we remove $P$ from $\mathcal{MP}$ by \emph{deletePathsFromMP}, and insert $sp_{i,h}$ into $\mathcal{MP}$ (lines 14 to 16) for $<t_i, t_h>$ by \emph{insertPathToMP}. Otherwise, if $|sp_{i,h}|$ is equal to any path between $t_i$ and $t_h$ stored in $\mathcal{MP}$, we directly add it into $\mathcal{MP}$ (lines 19 to 20). Meanwhile, the occurrence of each vertex is continuously updated during the above-mentioned path exploration.

\begin{exmp}
In Figure \ref{fig:steiner_tree_construction} (a), the search process propagates to $v_3$, $v_4$, $v_6$, $v_7$, $v_{14}$,  which are the first-level neighbors of $t_0$ (red dotted circle). By checking the connectivity of $v_3$, $v_4$, $v_6$, $v_7$, $v_{14}$ from already visited vertices, paths $\{t_0, v_{14}, t_1\}$, $\{t_0, v_6, t_1\}$, $\{t_0, v_6, t_2\}$, $\{t_0, v_7, t_2\}$ are discovered and inserted into $\mathcal{MP}$.
\end{exmp}

\vspace{-1mm}
\stitle{Round-Robin Scheduling.} We schedule the iteration over the union of sketches by weighted round-robin \cite{DBLP:journals/jsac/KatevenisSC91} (line 5). The next keyword for which the BFS should expand is determined based on the \emph{average distance} from a keyword vertex $t_i$ to all other keyword vertices as: 

\begin{defn}[Average Distance]
% \small
$\forall t_i \in w_V $, the average distance $ \bar{d_i} = \frac{\sum_{j, i \neq j, sp_{i,j} < \infty} |sp_{i,j}|}{|\{v_j | i \neq j, sp_{i,j} < \infty\}|}$.
\end{defn}

 After each iteration over the sketch of  $t_i \in w_V$, we look up the shortest paths in $\mathcal{MP}$ and compute $\bar{d}$ for all keywords to select the keyword to be explored with the smallest average distance in the next iteration (line 5).

%%%%%%%%%%%%%%%%%%%%%%%%%%%%%%%%%%%%%%%%
%Path Insertion

\stitle{Path Selection.} At the end of the exploration over each level from a source vertex, we retain only one optimal path to other vertices for further expansion in the next level (line 20). If there are multiple paths of shortest length, we need to select an optimal path to retain (Algorithm \ref{algo:steiner_tree_construction}, Procedure \emph{PathSelection}).

The main goal of path selection is to ensure \emph{structural} and \emph{semantic} correctness. The principle is to maximize semantic correctness while ensuring the most compact structure possible. Lines 2 to 4 of the $PathSelection$ procedure iterate over all paths in $\mathcal{MP}$ by vertex pair.

Based on Definition \ref{def:occ}, we compute the total number of occurrences for a given path (line 5). For each $sp_{i,h} \in P$, we compute $pathOcc$, \ie the total number of occurrences of $sp_{i,h}$, by $getOcc$, where $pathOcc(sp_{i,h}) = \sum_j occ_j,\forall v_j \in sp_{i,h}$. In line 6, $getCoveredEL$ extracts the edge labels ${e}$ covered by $sp_{i,h}$, where $e \in w_{EL}^d$. The path with the highest occurrence score is selected (lines 9 to 11). The motivating intuition is that such a path has a higher chance of being the center of $G_{st}$, which has the minimal largest distance to other vertex keywords.

% \vspace{-3mm}
\begin{algorithm}[ht]
\small
\caption{Steiner Tree Construction}
  \SetAlgoLined\DontPrintSemicolon
\label{algo:steiner_tree_construction}
\KwIn{Graph sketch $G_{sk}(v_i)$, search state $SearchState_i$, $t_i \in w_V$, set of edge labels $w_{EL}$}
\KwOut{ST $G_{st}(w_V)$ covering $w_V$}
\SetKwFunction{SortBySearchPriority}{SortBySearchPriority}
\SetKwFunction{getVisited}{getVisited}
$SearchStates \gets$ new Array $\{SearchState_i$\}                  \\
sortBySearchPriority($SearchStates$)                   \\
$S_{cover} \gets \emptyset$; $G_{st} \gets \emptyset$; $w_{EL}^d \gets w_{EL}$         \\

/* Select $SearchState_i$ based on $\bar{d}$ */  \\
$SearchState_i \gets$ \text{Pick from} $SearchStates$ \\
$\mathcal{MP} \gets \emptyset$ \\

$V_{cl} \gets SearchState_i$.getVerticesByLevel()    \\
$SearchState_i.F_i \gets SearchState_i.F_i \cup V_{cl}$  \\
\ForEach{$v_j \in V_{cl}$}{
\ForEach{$t_h \in SearchState_h$, $h \neq i$}{
\If{$\mathcal{N}(v_j) \cap SearchState_h.F_h \neq \emptyset$ \textbf{or} $v_j \in G_{sk}(t_h)$}{
    $sp_{i, h} \gets$ getPathFromSketch($t_i, v_j, G_{sk}(t_i)$) $\oplus$ getPathFromSketch($t_h, v_j, G_{sk}(t_h)$) \\
    $P \gets $getPathsFromMP($<t_i,t_h>, \mathcal{MP}$). \\
    \If{$P \neq \emptyset$ and $|sp_{i, h}| \le |p|, \exists p \in P$}{
        \text{deletePathsFromMP}($<t_i,t_h>,\mathcal{MP}$)  \\
        \text{insertPathToMP}($sp_{i, h}, <t_i,t_h>,\mathcal{MP}$)  \\
        $S_{cover} \gets S_{cover} \cup \{t_i, t_h\}$           \\
    }
    \ElseIf{$|sp_{i, h}| =  |p|, \forall p \in P $}{
     \text{insertPathToMP}($sp_{i, h}, <t_i,t_h>,\mathcal{MP}$)  \\
    }
}}}
\text{PathSelection}($G_{st}, \mathcal{MP}, SearchStates,w_{EL}^d$)\text{} \\
\If{$|S_{cover}| = |w_V|$}{
    \Return $G_{st}$
}
\setcounter{AlgoLine}{0}
\SetKwProg{myproc}{Procedure}{}{}
\myproc{PathSelection($G_{st}, \mathcal{MP}, SearchStates, w_{EL}^d$)}{
\ForEach{$<t_i, t_h> \in \mathcal{MP}$\text{.keys()}}{
$p_{opt} \gets \emptyset$; $maxOcc \gets 0$; \\
\ForEach{$sp_{i,h} \gets$ getPaths($<t_i, t_h>, \mathcal{MP}$)}{ 
$pathOcc \gets$ getOcc($sp_{i,h}$, $SearchStates$)   \\
$w_{EL}^{cover} \gets$ getCoveredEL($sp_{i,h}$,$SearchStates$) \\
\If{ $sp_{i,h} = \emptyset$}{ \textbf{continue} }
\If{$maxOcc < pathOcc$}{
        $maxOcc = pathOcc$        \\
        $p_{opt}= sp_{i,h}$ } 
\ElseIf{$maxOcc = pathOcc$ and CompareCover($p_{opt}, sp_{i,h}$)}{
        $maxOcc = pathOcc$, $p_{opt} = sp_{i,h}$        \\
        $w_{EL}^d$.removeAll$(w_{EL}^{cover})$ 
        }
Insert $p_{opt}$ into $G_{st}$ by union-find}}
\Return}
\end{algorithm}
% \vspace{-3mm}

From lines 12 to 14, if two paths have the same occurrence score, we obtain distinct edge labels for each of them. A path that includes more edge labels in $w_{EL}$ (line 12) is considered to be more semantically matched to the keyword query. The funnction $compareCover(p_{opt}, sp_{i,h})$ returns true if $sp_{i,h}$ covers more edge labels in $w_{EL}^d$. Once $p_{opt}$ is determined, we apply a union-find algorithm to check whether the candidate path forms a cycle in the current tree or not. Any cyclic paths are eliminated immediately (line 15).

\begin{exmp}
% \small
In Figure \ref{fig:steiner_tree_construction} (b), we sort the path candidates by their overall occurrence (\ie $pathOcc = \sum_j occ_j), \forall j \in sp_{i,h}$. For example, the occurrence of the path $(t_0,v_{14}, t_1) = occ_0 + occ_{14} + occ_1 = 3 + 2 + 3 = 8$. Similarly, we have $pathOcc(t_0, v_6, t_1) = 9$, $pathOcc(t_0, v_6, t_2) = 9$, and $pathOcc(t_0, v_7, t_2) = 8$. The paths $(t_0, v_{14}, t_1)$, $(t_0, v_6, t_1)$ start from $t_0$ and end at $t_1$. We keep $(t_0, v_6, t_1)$ instead of $(t_0, v_{14}, t_1)$, since $(t_0, v_6, t_1)$ has higher overall occurrence. Similarly, we keep $(t_0, v_6, t_2)$, while $(t_0, v_7, t_2)$ is abandoned.
\end{exmp}

\begin{figure}[h]
\vspace{-2mm}
\begin{center}
\includegraphics[width=0.8\linewidth]{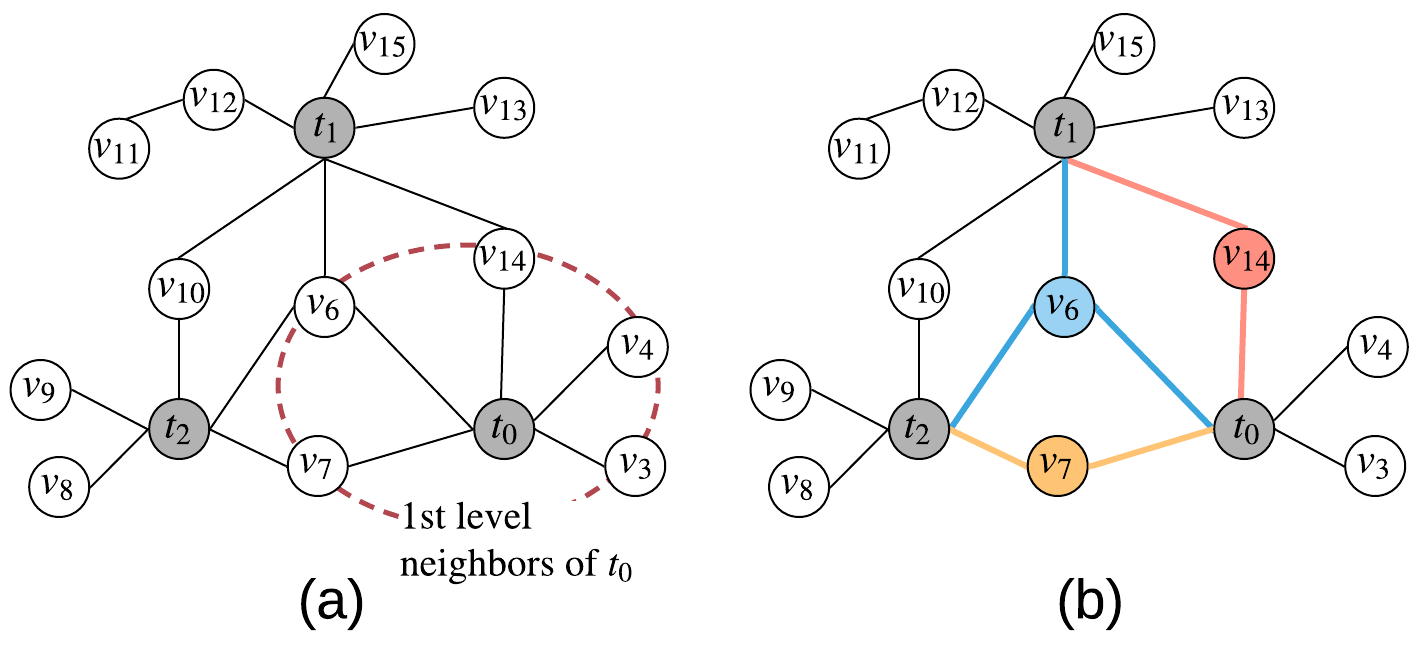}
\captionof{figure}{Examples of ST construction with path selection.}
\label{fig:steiner_tree_construction}
\end{center}
\vspace{-2mm}
\end{figure}

\vspace{1mm}
\subsection{MCS Construction}
\label{subsec:scg-construct}
% \stitle{Dangling Edge Label Exploration}
After the previous steps, we obtain an estimation of an ST $G_{st}$. We use  $G_{st}$ as the backbone to build the MCS by filling it with the edge labels. We next discuss how we handle the dangling edge labels.

For every vertex $v$ in $G_{st}$, we enumerate all the edge labels incident to $v$ in the original graph and compare them to the dangling edge labels. The edge label enumeration is scheduled in a round-robin vertex-by-vertex and level-by-level fashion. All the vertices included in $G_{st}(w)$ are regarded as the initial vertices for exploring the dangling edge labels. 
Then, starting from a vertex $v \in G_{st}$, we search the graph in BFS style, \ie we check the edge labels incident to the current  vertex $v_{curr}$\ignore{ which can be done by looking up the SPO index}. If a dangling edge label $el$ is found to be the label of the edge $(v_{curr}, v'_{curr})$, we retrieve the path from $v$ to $v_{curr}'$ and insert it into $G_{st}(w)$. Otherwise, we search for another vertex $v_{next}$. We repeat this process until all the dangling edge labels have been found.

% \vspace{-2mm}
\begin{figure}[h]
\begin{center}
\includegraphics[width=0.9\linewidth]{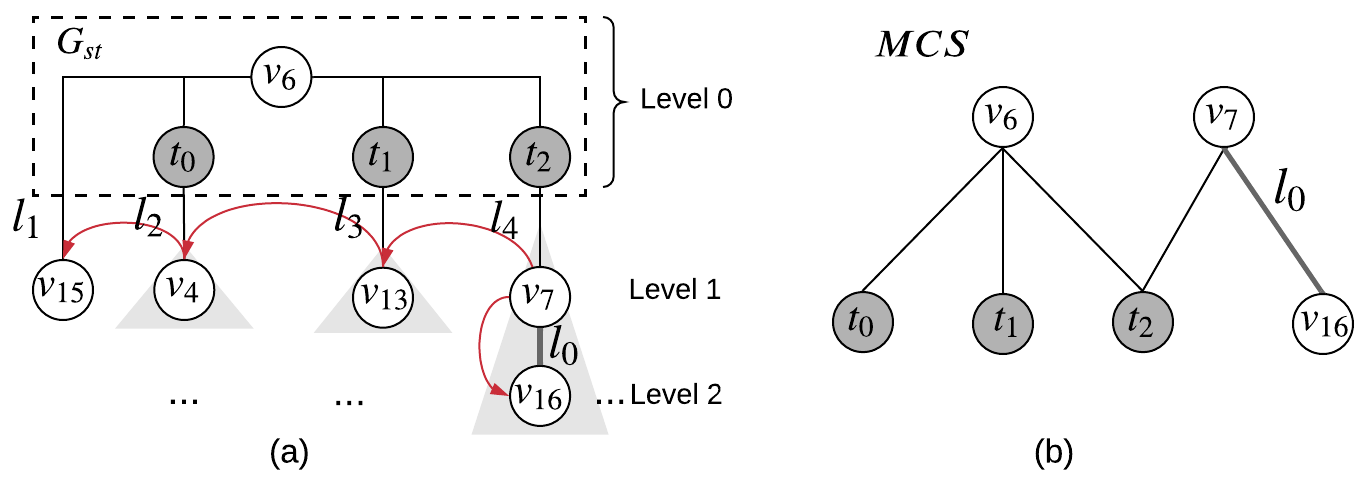}
\captionof{figure}{Finding dangling edge labels. \ignore{Black arrow represents the edge labels has been determined.} Red arrows indicate the search order from one vertex to another.}
\label{fig:dangling_edge}
\end{center}
\end{figure}

\begin{exmp}
% \small
In Figure \ref{fig:dangling_edge} (a), we assume $l_0$ is the edge label to be found, and the found ST $G_{st}$ is as shown inside the dotted-line box. We start the process from the vertices at level 0 (that is, the vertices in $G_{st}$), \ie  $t_0, t_1, t_2, v_6$. 
Suppose the order of exploration is $t_2,t_1,t_0,v_6$.   
We first find that the edges incident on $t_2,t_1,t_0,v_6$ do not have label $l_0$. Then, we jump to the next level and check edges incident on $v_7$. Since the edge $(v_7,v_{16})$ has label $l_0$, we retrieve the path from $v_{16}$ to $t_2$ and add it to $G_{st}$.   The final MCS is as shown in Figure \ref{fig:dangling_edge} (b).
\end{exmp}

\vspace{-1mm}
\stitle{Complexity Analysis.}
We use a disk-based skip list to build the \emph{SPO} permutation index, sketches and PLL index. Thus, the search operation on these three indexes can be estimated as $\mathcal{O}(log|V|)$. (1) The time complexity to load the sketches of $k$ keyword vertices is equal to $klog|V|$; (2) For the sketch patch-up, in the worst case, we need to perform $log|V|$ lookups in the SPO index for $k$ keyword vertices, so the complexity is equal to $\mathcal{O}(klog^2|V|)$; (3) In the MCS construction, we need $\mathcal{O}(k^2log^2(|V|))$ lookups in the SPO index to check intersections (between frontiers and neighbor vertices); (4) Finally, when handling dangling edges, the time complexity is $\mathcal{O}(|V + E|)$ (in the worst case, we need to traverse the whole graph). The overall time complexity of MCS construction is thus $\mathcal{O}((1+log|V| + k^2log^2|V|)log|V|)$ (plus $\mathcal{O}(|V + E|)$ in the worst case, when dangling edge labels must be searched over the entire graph).

% \stitle{Variation of Top-K MCSs.} The current implementation of RECON aims to discover the most compact graph pattern to generate a SPARQL query for query answering. It thus potentially misses some other MCSs. To extract the top-K MCSs with different semantics, a straightforward solution is to repeatedly run the above-mentioned procedures until no more trees can be found. Then, a post-processing step may be applied to distinguish the different tree patterns and generate the corresponding SPARQL queries. 

% \vspace{-3mm}
% \begin{algorithm}[ht!]
% \small
% \caption{Dangling Edge Label Exploration}
% \label{algo:dangling_edge_label}
% \KwIn{$G_{st}(w_V)$ Steiner tree for ($w_V$), $G$ ABox, $w_{EL}^d$ set of dangling edge labels}
% \KwOut{Steiner tree $G_{st}$ with dangling edge label combined}

% Initialize $SearchState_i, \forall v_i \in G_{st}$

% \ForEach{$SearchState_i \in SearchStates$ }{
% /* Iterate with a round-robin fashion */

% \If{$w_{EL}^d$ = $\emptyset$}{
% \Return $G_{st}$
% }

% \If{ $!SearchState_i.$ \text{hasNext()} \textbf{or} $SearchState_i$.next() is visited}{
% \textbf{continue};
% }
% $v \gets SearchState_i$.next()  \\

% \If{$\exists e_d^L \in w_{EL}^d$, where $\exists (v, e_d^L)$ \textbf{or} $ \exists (e_d^L, v) \in G$}{
% $v_{dst} \gets$ getDstVertex($v, e_d^L$)     \\
% $sp_{i,dst} \gets$ build $(v_i,...,v_{dst})$ with backtracking  \\
% }
% \Else{ 
% %/* \texttt{Enqueue neighbors of} $v$  */ \\
% $SearchState_i$.enqueue ($\mathcal{N}(v)$)
% }}
% \Return $ G_{st}$
% \end{algorithm}

\section{Ontology Exploration}
\label{sec:ontologyExploration}
We now concentrate on RECON's reasoning mechanism, which employs the Wu-Palmer conceptual similarity \cite{DBLP:conf/acl/WuP94} to measure the closeness between KG concepts. Given a TBox $T$, and two concepts $C_1, C_2 \in T$, the Wu-Palmer conceptual similarity between $C_1$ and $C_2$ is defined as: 

\begin{small}
\begin{equation}
\label{eq:wu-palmer}
    wp(C_1, C_2) = 2*\frac{dep(LCA(C_1,C_2))}{dep(C_1) + dep(C_2)},
\end{equation}
\vspace{-3mm}
\end{small}

\noindent{}where $LCA(C_1, C_2)$ is the lowest common ancestor of $C_1$ and $C_2$, and $dep(C)$ is the depth of concept C in $T$.
$wp(C_1, C_2)$ belongs to $[0, 1]$, and $wp(C_1, C_2) = 1$ if $C_1$ and $C_2$ are identical. Note that if the hierarchy of concepts forms a forest, we will add a pseudo-root $r_p$ as the parent root for all tree roots. Moreover, if the ontology is cyclic, all the concepts in the same cycle are equivalent. Thus, the depth of a given concept is obtained by computing the depth of its corresponding largest strongly connected component \cite{DBLP:journals/semweb/CalvaneseCKKLRR17}.

\begin{defn}[Derivative of Keyword Set]
Given a TBox $T$ and a keyword set $w = \{t_1, t_2, ..., t_n\}$, we say that $w' = \{t_1', t_2',..., t_n'\}$ is a derivative of $w$, if and only if:
$\forall i \in \{1,...,n\}$ we have $t_i = t_i'$ or $t'_i \sqsubseteq t_i$ (\ie  $t'_i$ is a descendant of $t_i$ in $T$).
\end{defn}
\vspace{-3mm}

\begin{exmp}
In Figure \ref{fig:reasoning_semantic}, if $w=\{$dbr:BMW\_M10, dbo:Engine, dbo:predecessor$\}$, $w'=\{$dbr:BMW\_M10, dbo:RocketEngine, dbo:predecessor$\}$ is a derivative of $w$.
\end{exmp}

\stitle{Similarity of Keyword Sets.} 
Consider keyword set $w$ and its derivative $w'$.  Assume $w = \{t_1, t_2,...,t_k,t_{k+1},...t_n\}$, $w' = \{t'_1, t'_2,...,t'_k,t_{k+1},...t_n\}$, where $t'_i\sqsubseteq t_i$.  
%$t_i$ is the ancestor of $t_i', 1 \le i \le k$. 
\ignore{
Then $LCA(t_i,t_i') = t_i$, and we have
\begin{align}
\label{eq:sim-kw}
Sim(t_i, t_i') = 2*\frac{dep(t_i)}{dep(t_i) + dep(t_i')}.
\end{align}
}
Regarding $w$ and $w'$ as two finite sets, we can calculate their Jaccard similarity: $J(w, w') = \frac{|w\cap w'|}{|w \cup w'|}$. Now since each pair of corresponding elements $(t_i, t'_i)$ in $w$ and $w'$ has a Wu-Palmer similarity $wp(t_i, t'_i)$, we can combine them with the Jaccard similarity $J(w, w')$ to evaluate the similarity between the two keyword sets $w$ and $w'$ as follows:  

\begin{small}
\begin{align}
    Sim(w, w') &= \frac{ |w \cap w'| + (wp(t_1, t_1') + ... + wp(t_k, t_k'))}{|w \cup w'|} \\
    &=\frac{ (n-k) + \sum_{1 \le i \le k} wp(t_i, t_i')}{(n + k)}.
\end{align}
\end{small}
In particular, when $k = 0$, \ie $w$ and $w'$ are identical, we have $Sim(w, w') = 1$.

Algorithm~\ref{algo:reasoning_fwk} shows the ontology exploration process.
We initialize a priority queue of the keyword set with the similarity scores (line 1). We enqueue and sort all derivatives of $w$ by their similarity. The derivative with more similarity from $w$ has a higher priority. We dequeue the keyword set derivatives from the priority queue and compute the corresponding MCS (lines 3 to 4). If an MCS exists, we stop dequeuing further keyword sets, generate and rewrite the SPARQL query $q^{w'}$ for the current keyword set (lines 5 to 7). Finally, we execute $q^{w'}$ in the triple store (line 8). All the other keyword set derivatives which have the same similarity as the chosen keyword set will participate in the SPARQL query rewriting. For example, consider $w = w_0$ and its derivative stored in $PQ$:

\begin{small}
\vspace{-3mm}
\begin{align*}
      PQ = \{
      {w},
      \overbrace{
      \underbrace{w_1, ...,w_{i-1}}_{MCS = \emptyset},
      \underbrace{w_i}_{MCS \neq \emptyset},
      \underbrace{w_{i+1},...,w_j}_{\text{Rewrite}}
      }^{\text{Same similarity to } w} ,w_{j+1},...
   \},
\end{align*}
\end{small}

\noindent{}where $\{w_1,...,w_j\}$ have the same similarity as $w$. We assume that RECON finds an empty MCS for $\{w_1,...,w_{i - 1}\}$, and successfully constructs a non-empty MCS for $w_i$. Then, it generates a SPARQL query based on $w_i$ and stops dequeuing the elements from $PQ$. The system generates the graph patterns of $\{w_{i+1},...,w_j\}$ based on the MCS of $w_i$, and chains them together by UNION operators.

Theoretically, Algorithm \ref{algo:reasoning_fwk} could be very expensive if the number of  subconcepts (\ie descendants) is large. In practice, however, users typically try to provide concrete queries to avoid ambiguity. The statistics we have for DBPedia on the LC-QUAD benchmark show that most types of vertices ($ \geq 95\%$) in a keyword/SPARQL query have less than three subtypes.

% \vspace{-2mm}
\begin{algorithm}[ht!]
\small
\caption{Reasoning Framework}
\label{algo:reasoning_fwk}
\KwIn{$KG = G \cup T$, keyword set $w = w_V \cup w_{EL}$}
\KwOut{Answers of $w$ on $G$}

Initialize $PQ$ with $w$ and the derivatives of $w$ \\

\While{$PQ$ not empty}{
$w' \gets$ dequeue keyword combination from $PQ$ \\
$MCS(w') \gets$ computeMCS($w', T, G$) \\
\If{$MCS(w') \neq \emptyset$}{
    $q^{w'} \gets$ generateSPARQL($MCS(w')$)  \\
    $q^{w'} \gets$ rewriteSPARQL($MCS(w'),T$)  \\
    execute $q^{w'}$ in triple store \\
    \Return answers of $w'$ on $G$ \\
}
}
\end{algorithm}
\vspace{-2mm}

\stitle{Stop Condition.} In Algorithm \ref{algo:reasoning_fwk}, the system stops searching once it finds a non-empty MCS (lines 5 to 9). This guarantees that the derived keyword set used for generating the SPARQL query is the most semantically similar to the original keyword set. In contrast, even if we can find more non-empty trees by continuing to search in the queue, it may result in totally different semantics from the user's intention. For example, given the input keywords \{\texttt{dbo:Person, dbr:Apollo\_11}\}, RECON can find a non-empty MCS via \{\texttt{dbo:Astronaut, dbr:Apollo\_11}\}. However, another MCS with no practical meaning could also be generated by the more concrete keywords combination \{\texttt{dbr:Apollo\_11, dbo:RacingDriver}\}.

\begin{exmp}
Figure \ref{fig:query_generation} gives an example question from DBpedia: ``What is the predecessor of the car engine which was succeeded by BMW M43?" with the keyword query \small{\{\texttt{dbo:Engine, dbo:successor, dbo:predecessor, dbr:BMW\_M43}\}}. The derivatives of the original keyword set are \small{\texttt{\{\{dbo:RocketEngine, dbo:successor, dbo:predecessor, dbr:BMW\_M43\}, \{dbo:AutomobileEngine, dbo:successor, dbo:pre-\\decessor, dbr:BMW\_M43\}\}}}. The input keywords \small{\texttt{\{dbo:-\\engine, dbo:successor, dbo:predecessor, dbr:-\\BMW\_M43\}}} will be considered first for MCS construction. If the system finds a non-empty ST, we generate the SPARQL query, and rewrite it based on the ontology.
\end{exmp}

\begin{figure}[h]
\vspace{-2mm}
\begin{center}
\includegraphics[width=1\linewidth]{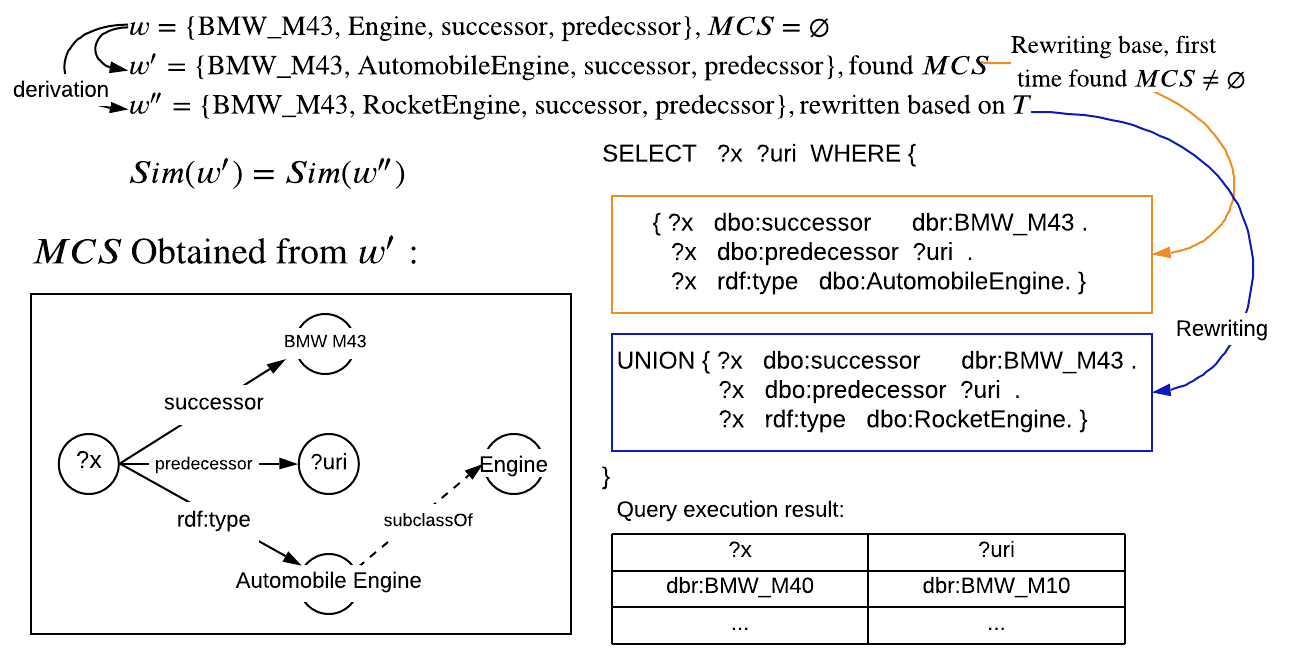}
\captionof{figure}{Example of query generation and rewriting}
\label{fig:query_generation}
\end{center}
\end{figure}
% \vspace{-2mm}

\stitle{Complexity Analysis.}  For a given set of keyword vertices $w = \{t_1, t_2,...,t_n\}$, $W'$ is the set of derivatives from $w$, the size of which depends on the width and depth of the subtree of $t_i \in w$. $T_S$ is the time complexity to construct the MCS for $w$ (\cf Section \ref{subsec:scg-construct}). Thus, the worst-case time complexity of reasoning over $w$ is $|W'|T_S$.

\section{Experiments}
\label{sec:experiments}
The purpose of our experiments is to evaluate RECON in terms of index construction time, index size, online execution time, result accuracy, and the time cost and effectiveness of the reasoning component.
Since there are no previous algorithms for finding an MCS, and approximate ST computation is a major step in our approach, we divide our experiments into two parts: (A) Comparison with existing algorithms for approximate ST construction, and (B) time cost of approximate MCS computation and MCS size. All the experiments are conducted on a server with 56-Core-CPU of Intel(R) Xeon(R) E5-2698v4@2.2GHz, 64 GB RAM, 512 GB disk. For ease of presentation, we will use ST (resp. MCS) to mean approximate ST (resp. approximate MCS) unless explicitly stated otherwise. 

\subsection{Experiments on ST Computation}
In this part of the experiment, we compare RECON with five representative algorithms for ST computation: BANKS II \cite{DBLP:conf/vldb/KacholiaPCSDK05}, BLINKS \cite{DBLP:conf/sigmod/HeWYY07}, DPBF \cite{DBLP:conf/icde/DingYWQZL07}, SketchLS \cite{DBLP:conf/cikm/GubichevN12}, and KeyKG$^+$ \cite{DBLP:conf/www/Shi0K20}. BANKS II, BLINKS, and keyKG{$^+$} are keyword search systems that return approximate group STs; SketchLS is an algorithm for computing an approximate ST; and DPBF is an algorithm for computing the exact group ST. See Section~\ref{sec:relatedWork} for more details on these systems. We implemented all these systems in Java 8. We encoded the input RDF triples into integers with Apache Spark 2.4.x. The SPO permutation and sketch index in RECON are implemented using Apache Lucene 6.x. We use RDF-3X \cite{DBLP:journals/pvldb/NeumannW08} as the triple store (built on gcc 7) in RECON for SPARQL query evaluation.

\stitle{Datasets and Queries.} We use three large-scale real-life datasets and one synthetic dataset for the experiments. DBpedia \cite{DBLP:conf/semweb/AuerBKLCI07}, Wikidata \cite{DBLP:conf/www/Vrandecic12} and Freebase \cite{DBLP:conf/sigmod/BollackerEPST08} are three popular open-domain knowledge bases. The Lehigh University Benchmark (LUBM-$N$) \cite{DBLP:journals/ws/GuoPH05} is a widely used dataset which models the university domain, including instances of universities, departments, students, faculty members, \etc. It is configurable and the parameter $N$ indicates the number of universities (in our case, $N=2000$). We randomly generate 200 queries with the number of vertex keywords $k=2,4,6,8$ for each graph (\ie 1600 queries in total).

Since some systems are not able to handle graphs at such scale (\ie BLINKS index exceeds the memory limit, and SketchLS cannot finish the preprocessing within 24 hours, DPBF takes hours or even days to answer a query), we sample a small subgraph from each of the original datasets to support the evaluation, where the number of edges are set to be $\simeq 100K$ (for LUBM $N$, we use $N=1$). The statistics of these graphs are shown in Table \ref{tab-stat}.

%\subsection{Evaluation Plan}

\vspace{-2mm}
\begin{table}[h] \centering
\small
\setlength\tabcolsep{4pt} 
\begin{tabular}{@{}lcccc@{}}\toprule
      &  DBpedia   & Wikidata   & Freebase     & LUBM           \\ 
      \hline 
LG ($|V|/|E|$ in M)    &  49/297  & 132/335  &  97/321  & 66/277   \\
SG ($|V|/|E|$ in K)  &   21/102  & 88/104  &  41/103   & 26/103   \\
\bottomrule
\end{tabular}
\caption{The statistics of the graphs used in the experiments (LG: large graphs; SG: small graphs; M: millions; K: thousands). }
\label{tab-stat}
\end{table}

\stitle{Evaluation Metrics.} We compare RECON with other systems in terms of online execution time (which is the total time for searching and recovering answers), result quality, index construction time and index size. We also evaluate the effect of our path selection and sketch patch-up strategies on the accuracy of the approximate ST, as well as the impact of reasoning on the query time. Considering result quality, we use the following two metrics.

\sstitle{Approximation Error.} We adopt \emph{approximation error} ($App.Er$) \cite{DBLP:conf/cikm/GubichevBSW10,DBLP:conf/cikm/GubichevN12} as our metric for result accuracy: $App.Er =\frac{|ST| - |ST_{min}|}{|ST_{min}|}$ where $|ST|$ is the size of the ST returned by the current system, and $|ST_{min}|$ is the minimum size of the STs returned by all systems under evaluation.

\sstitle{Result Coverage.} Similar to recall, the {\em result coverage} is used to measure the number of smallest STs returned by the current system against that returned by the best among all those under consideration. Let $S_{min}$ be the size of the smallest ST returned by all systems. Note that if an exact algorithm is included, then $S_{sim}$ is the size of the real ST.
The \emph{result coverage} is defined as $RC = \frac{|ST_{curr}|}{max \{|ST_{any}|\}}$, where $|ST_{curr}|$ is the number of STs of size $S_{min}$ returned by the current system, and $|ST_{any}|$ is the number of STs of size $S_{min}$ returned by any system under consideration. Due to computational time constraints over large datasets, we set a maximum value of 10 for $max\{|ST_{any}|\}$. That is, as long as a system returns 10 minimum size STs we stop the search and regard its result coverage to be 1.

\subsection{Experimental Results}
\stitle{Index Size and Construction Time.}
We compare the index size and construction time of BLINKS, SketchLS, KeyKG$^+$ and RECON. For fairness, we set the maximum number of hops (that BFS can reach) to 3 for all systems. The number of partitions in BLINKS is set to 100. We use the implementation of graph betweenness centrality computation from \cite{DBLP:conf/ssdbm/AlGhamdiJSK17} for KeyKG$^+$. Since BLINKS, SketchLS and KeyKG$^+$ do not scale to large graphs, we use LUBM-1 and a medium-size subset of DBPedia (9.4M vertices, 30M edges) for this test. 

Table \ref{tab:offline-preproc} gives the index construction time and index size of the above-mentioned systems. On LUBM-1, RECON is almost an order of magnitude faster than other systems for index construction. Due to the costly computation of the betweenness centrality, KeyKG$^+$ has the longest preproccessing time. DBPedia, BLINKS and KeyKG$^+$ fail to finish the index construction within 24 hours on the medium-size graph. SketchLS takes more than seven hours, which is far longer than RECON (around 30 minutes). Note that SketchLS produces a lot of repeated sketches due to the fact that the same landmarks may be chosen multiple times. However, the RECON index is several times larger than that of SketchLS, and this is mainly due to PLL index which takes 1863 MB. A possible way to reduce the index storage is by compression \cite{DBLP:conf/sigmod/LiQQ0CL19}.

\begin{table}[h] \centering
\small
\setlength\tabcolsep{3.75pt} 
\begin{tabular}{@{}lcccc|cccc@{}}\toprule
&   \multicolumn{4}{c}{\textbf{LUBM-1}}  & \multicolumn{4}{c}{\textbf{DBpedia} }\\ \cmidrule{2-5}\cmidrule{6-9}\\
    &  Bl &  Sk  &  Ky   & \textbf{Re}      &  Bl     &  Sk    &  Ky  &   \textbf{Re} \\ \midrule
 Index Size (MB)    &  4.5  &  0.73  &  6.2   &   5.7    &  -   & 483&  -    &  2200        \\
 Time (s)      & 22.3    & 29.7    &  71   &   2.9   &  -  &  26121  &  -      &  1829  \\
\bottomrule
\end{tabular}
\caption{Index size and offline preprocessing time. Bl: BLINKS; Sk: Sketch; Ky: KeyKG$^+$; Re: RECON. }
\label{tab:offline-preproc}
\end{table}

\stitle{Online Execution Time.} In Figure \ref{fig:big-time}, we plot the execution time of BANKS II, DPBF, KeyKG$^{+}$ and RECON on the original graphs. We also provide the average execution time in Table \ref{tab:avg-et-prec} (upper part). DPBF fails to return the result before the timeout of five minutes for all queries. Blinks and SketchLS cannot construct their indexes within 24 hours.
BANKS II handles most queries quite well and returns the result with a subsecond-level delay. With an increasing number of keywords, BANKS II shows a quasi-linear increase in terms of runtime. For RECON, the execution time stays between $45 \sim 271$ ms, where the disk I/O is the main performance bottleneck. The most costly part comes from PLL path retrieval for online patch-up. For some vertices with a very high degree, it is very expensive to read and deserialize the landmark index into memory and compute the corresponding shortest paths (to minimize memory consumption, RECON stores the indexes on the disk). To understand the performance impact of such behavior, we implement a \emph{hot cache} version of RECON, \ie the sketch and PLL index are pre-loaded into the memory. Consequently, RECON gains a \textbf{$5 \sim 7$ times speedup} for query execution. For KeyKG$^+$, the original version which uses betweenness ordering in constructing the PLL, fails to generate the index within 24 hours on the original graphs. To determine its search efficiency and quality, we use a modified version which constructs the PLL index based on degree ordering. As shown in the table, the modified version of keyKG$^+$ is about five times faster than RECON. This is mainly due to the fact that keyKG$^+$ does not suffer from I/O cost as it stores the index in memory. Compared with the hot cache version of RECON, however, its speed advantage is minor. More importantly, the result accuracy of keyKG$^+$ is unacceptably low, as can be seen in Table \ref{tab:avg-et-prec}.

The execution times on small graphs are plotted in Figure \ref{fig:small-time}. The average execution time are also summarized in Table \ref{tab:avg-et-prec} (lower part). With the increase in the number of keywords, the DPBF runtime increases exponentially. Specifically, to find the top-1 answer, the system requires $\le 1$ second, $3 \sim 4$ minutes, $5$ hours, and $\ge 24$ hours for $k=2,4,6,8$, respectively. Compared with RECON, SketchLS has a slight advantage in search efficiency since it does not have the additional cost of online patch-up, iteration order determination, and path selection. For BLINKS, instead of using Dijkstra's algorithm, we reuse our PLL implementation to compute the shortest paths. BLINKS is much faster than BANKS II, and comparable to RECON in execution time. Note that the high efficiency of BLINKS comes at the cost of large runtime memory storage requirements.

\begin{figure*}[ht]
\centering\captionsetup{format = hang}
\includegraphics[width=1\textwidth]{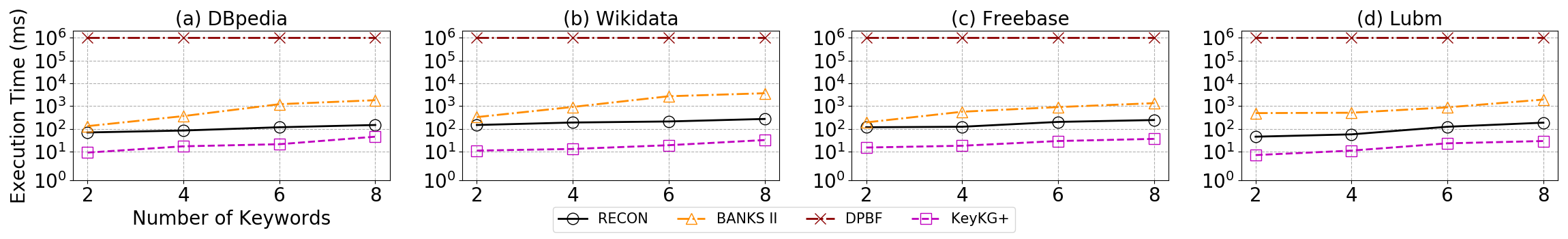}
  \caption{Execution time of RECON, KeyKG$^{+}$ and DPBF, BANKS II on large graphs.}
    \label{fig:big-time}
    \vspace{7mm}

\includegraphics[width=1\textwidth]{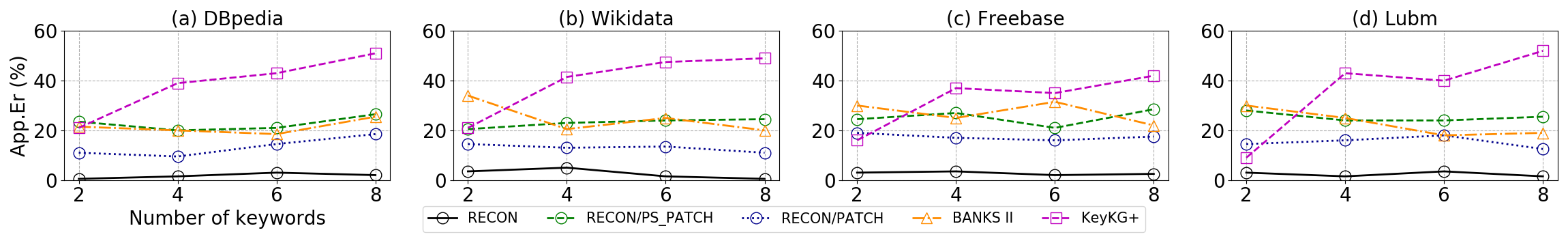}
  \caption{$App.Er$ of ST construction for BANKS II, KeyKG$^{+}$ and RECON on large graphs. RECON/PS\_PATCH: RECON without path selection and patch-up; RECON/PATCH: RECON without patch-up}
    \label{fig:big-precision}
  \vspace{7mm}
  
\includegraphics[width=1\textwidth]{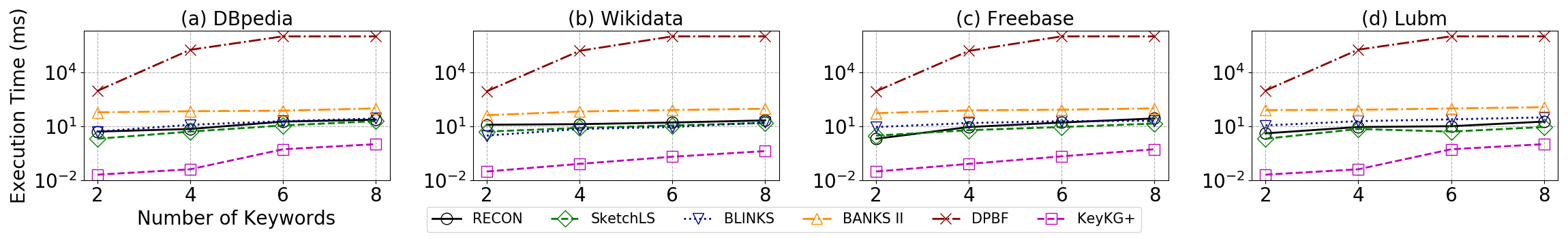}
  \caption{Execution time of RECON, DPBF, SketchLS, BLINKS, BANKS II and KeyKG$^+$ on small graphs.}
    \label{fig:small-time}
  \vspace{7mm}

\includegraphics[width=1\textwidth]{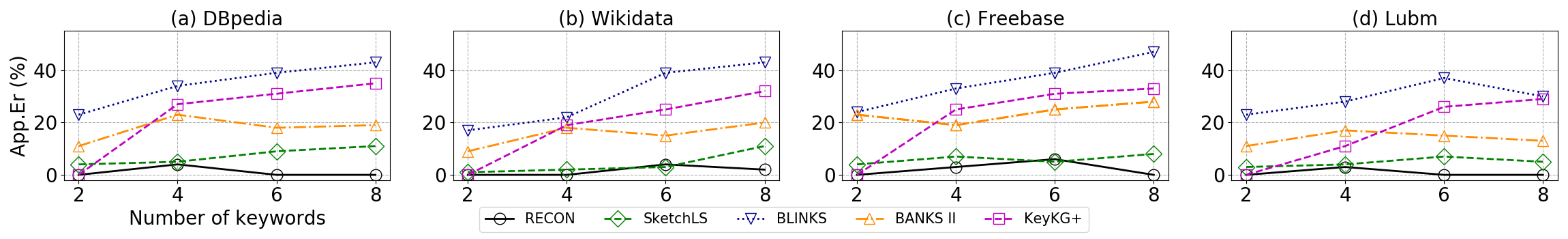}
\caption{$App.Er$ of ST construction for RECON, DPBF, SketchLS, BLINKS, BANKS and KeyKG$^+$ on small graphs.}
  \label{fig:small-precision}
\end{figure*}

% \vspace{-2mm}
\begin{table}[h] \centering
\small
\setlength\tabcolsep{3pt} 
\begin{tabular}{@{}lccccc|ccccc@{}}\toprule
&   \multicolumn{5}{c}{\textbf{ET (ms)}}  & \multicolumn{5}{c}{\textbf{App.Er} (\%)} \\ \cmidrule{2-6}\cmidrule{7-11}
          & Ba   &  Bl &  Sk  &  Ky   & \textbf{Re} &  Ba      &  Bl     &  Sk    &  Ky  &   \textbf{Re}     \\ \midrule
 DBpedia  & 878  &  -  &  -   &  23   &   104       &  21.37   &  -   &  -  & 38.5   &   1.75   \\
 Wikidata & 1899 &  -  & -    &  25   &   205       &  24.87   &  -   &  -  & 39.75  &   2.75  \\
 Freebase & 749  &  -  &  -   &  35   &   171       &  27.12   &  -   &  -  &  35     &   2.75   \\
 LUBM-1   & 951  &  -  &  -   &  18   &   103       &  23      &  -   &  -  &  36.4   &   2.5   \\
\toprule
 DBpedia  & 75  &  16  &  6   &  0.40   &   13      &  17.7    &  34.8   &  7.2  &  23.2  &   1.0   \\
 Wikidata & 70  &  9   &  10  &  0.18   &   16      &  15.5    &  30.2   &  4.2  &  19    &   1.5  \\
 Freebase & 78  &  16  &  8   &  0.21   &   14      &  23.7    &  35.7   &  6.0  &  22.2  &   2.5   \\
 LUBM-1   & 93  &  21  &  6   &  0.32   &   10      &  14.0    &  29.5   &  4.7  &  16.5 &   0.7   \\
\bottomrule
\end{tabular}
\caption{Average execution time and approximation error on \textbf{large graphs} (upper side) and \textbf{small graphs} (lower side). Bl: BLINKS; Sk: Sketch; Ba: BANKS; Ky: KeyKG$^+$; Re: RECON.}
\label{tab:avg-et-prec}
\end{table}
\vspace{3mm}

\stitle{Approximation Error.} Figure \ref{fig:big-precision} presents the $App.Er$ of the result returned by BANKS II, KeyKG$^{+}$ and RECON. In most cases, RECON can return a more compact tree than the other systems. In Table \ref{tab:avg-et-prec} (upper part), we summarize the average $App.Er$ for RECON, KeyKG$^{+}$ and BANKS II. 

Figure \ref{fig:small-precision} shows the $App.Er$ on small graphs. BLINKS has a higher approximation error than the other systems. We do not implement all the optimization techniques from \cite{DBLP:conf/sigmod/HeWYY07} due to missing details (\eg Metis partitioning, batch expansion and scoring function). We observe that the result quality of BLINKS highly depends on the graph partitioning. In fact, the graph partition impacts the search order, causing several of less compact trees may be revealed before the more compact answers. 
In some unfavorable cases, the system searches vertices that are isolated in certain partitions. This causes a lot of cross-partition searches. Such behavior also hinders the search efficiency. RECON and SketchLS both achieve low $App.Er$ in the experiments, while RECON outperforms SketchLS for all datasets. 

\stitle{Effect of path selection and sketch patch-up on approximation error.}
Figure \ref{fig:big-precision} also shows the effect of the path selection strategy and sketch patch-up of RECON on $App.Er$. RECON gains around 9\% and 12\% reduction in $App.Er$ (on average) respectively, after using these techniques.

\begin{table}[h] \centering
% \vspace{-2mm}
\small
\setlength\tabcolsep{5.25pt} 
\begin{tabular}{@{}lccccc|ccc@{}}\toprule
&   \multicolumn{5}{c}{\textbf{SG. $RC$}}   & \multicolumn{3}{c}{\textbf{LG. $RC$}}\\
 \cmidrule{2-6}\cmidrule{7-9}
          & Bl & Sk  & Ba & Ky & \textbf{Re} &  Ba & Ky & \textbf{Re}          \\ \midrule
DBpedia   &  0.56    &  0.69   &  0.78  &  0.60  & 0.75   &  0.91  & 0.37 & 0.75    \\
Wikidata  &  0.41    &  0.71   &  0.89  &  0.58  & 0.83   &  0.82  & 0.45 & 0.80     \\
Freebase  &  0.51    &  0.63   &  0.79  &  0.63  & 0.73   &  0.87  & 0.31 & 0.78     \\
LUBM-1    &  0.58    &  0.71   &  0.83  &  0.57  & 0.79   &  0.89  & 0.41 & 0.82      \\
\bottomrule
\end{tabular}
% \vspace{1mm}
\caption{Result coverage of all systems evaluated.}
\label{tab:coverage}
\vspace{1mm}
\end{table}

\stitle{Result Coverage.} Table \ref{tab:coverage} summarizes the result coverage $R$ of the systems. In general, BANKS II obtains the highest $R$. This result is not surprising. BANKS II always tries to explore as many answers as possible, causing it to return a lot of irrelevant answers along with the good ones. The result coverage of BLINKS is found to be lower than the other systems. We observe that on small graphs RECON achieves only slightly lower $R$ than BANKS II, where KeyKG$^+$ misses many small STs and has a low result coverage.

\stitle{Reasoning.} We evaluate the impact of reasoning on query execution time and result coverage for RECON. We generate 200 queries from LUBM-2000 (with LUBM ontology as the TBox), where each query $w$ contains at least one concept, which has at least one subclass. We ensure that RECON produces an empty answer for $w$ without reasoning. 

Our experiments indicate that RECON takes 814 ms to execute a query on average after activating reasoning, which is about seven times longer than without reasoning. This is mainly due to the repeated trial with derived keyword sets. Also, by activating the reasoning ability, RECON can find answers for all the queries and gain a result coverage of 1.

\vspace{-2mm}
\subsection{Experiments on MCS Construction } 

\begin{wraptable}{l}{45mm} 
\setlength\tabcolsep{4pt} 
\begin{tabular}{@{}lcccc@{}}\toprule
$|w_{EL}|$   &  $0$     &  $1$   & $2$    & $3$  \\ \hline 
ET (ms)    &  90   &  99   &  105    &  107 \\
\bottomrule
\end{tabular}
\caption{Execution time varies with the number of edge labels for MCS construction.}
\label{tab-mcs-res}
\vspace{2mm}
\end{wraptable}
In this part of the experiment, we compare the time costs for computing ST and MCSs using DBPedia and 200 queries mapped from the LC-QUAD benchmark. The number of vertex and edge label keywords vary from $2 \sim 3$ and $1 \sim 3$, respectively.

We give the average execution time of RECON in Table \ref{tab-mcs-res}. Since the diameters of MCSs are small in LC-QUAD ($\le 2$), the insertion of dangling edge labels is not costly. As shown in Table \ref{tab-mcs-res}, with the growth in the number of keywords, the execution time does not increase significantly. However, this computation could be very expensive in the worst case: for a dangling edge label $el$ and an ST, RECON may have to search the entire graph to find the path containing $el$ in order to build the MCS.

\section{Related Work}
\label{sec:relatedWork}

This work is closely related to (Group) Steiner tree (G/ST) computation and keyword search over KGs.  In Table \ref{tab:cmp}, we summarize the input/output formats, index requirements and underlying semantics for the discussed systems.

\stitle{(G)ST Computation/Approximation.}
Representative works on approximate ST computation include STAR \cite{DBLP:conf/icde/KasneciRSSW09} and SketchLS \cite{DBLP:conf/cikm/GubichevN12}. STAR first builds an initial tree by finding a common ancestor of all input vertices, and then iteratively improves the tree by replacing the longest path in the tree with a shorter one based on several heuristics. SketchLS computes an approximate ST using a sketch-based index and local search, and it has been shown to be faster and more accurate than STAR. However, the offline sketch computation uses BFS from the landmarks across the entire graph, which is too expensive for large-scale graphs like DBPedia. For instance, even with large memory allocation and parallel settings, the offline sketch computation takes more than seven hours on a graph with three million vertices in our experiments.

\begin{table}[ht] \centering
\small
\setlength\tabcolsep{2pt}
\begin{tabular}{@{}l|c|c|c|c|c|c|c|c|c|c|c@{}}\toprule

        & Ba & Bl  & Ea  & St  & Dp & Pd &  Su & Sk & As & Wi & Ky                  \\ \midrule
Input  &  G/R & G &  G/R & G   & G  & G &  RDF & G & RDF & G & G\\
Output &  T/R & T &  G & T   & T  & T &  G & T & G & G & T\\
Index &  - & $\surd$  & $\surd$  & - & - & - & $\surd$ & $\surd$ & $\surd$ & -  & $\surd$\\
SAT  &  GST & - &  - & ST & GST & GST & - & ST & - & - & GST\\
\bottomrule
\end{tabular}
\vspace{3mm}
\caption{Summary of input, output, index computation and underlying semantics for discussed systems. We use the following acronyms: BANKS/BANKS II (Ba), BLINKS (Bl), EASE (Ea), STAR (St), DPBF (Dp), Pd (PrunedDP), \cite{DBLP:journals/tkde/LeLKD14} (Su), SketchLS (Sk), \cite{DBLP:conf/cikm/Han0YZ17} (As), Wikisearch (Wi), KeyKG$^{+}$(Ky). \textbf{Input \& output:} Graph (G), Relational (R). \textbf{Output:} Relational (R), Tree (T), G (Graph). \textbf{SAT:} Group/Steiner-tree-like semantics (G/ST)}
\label{tab:cmp}
\vspace{3mm}
\end{table}

Several graph keyword search systems are based on a GST where given $k$ sets of vertices, the GST is a minimum connected tree that contains a vertex from each set. Among these systems, BANKS \cite{DBLP:conf/icde/BhalotiaHNCS02} computes an approximate GST with a backward search algorithm, and BANKS II \cite{DBLP:conf/vldb/KacholiaPCSDK05} supports forward search and a vertex-activation-based approach to further accelerate the search process. BLINKS \cite{DBLP:conf/sigmod/HeWYY07} is a keyword search system over general graphs, where the output is a tree with minimum total distance from the root to the keyword vertices. Note that the tree is similar but not identical to a GST. BLINKS introduces a partition-based indexing technique for fast shortest path retrieval. However, the preprocessing of BLINKS runs $N$ copies of Dijkstra's algorithm for all-pairs shortest path computation, which is expensive and consumes a huge amount of memory. Besides, the quality of results returned by BLINKS highly depends on how the graph is partitioned.

DPBF \cite{DBLP:conf/icde/DingYWQZL07} computes an exact (G)ST using parameterized dynamic programming. However, as demonstrated in our experiments, DPBF barely scales to even moderate-sized graphs. PrunedDP$^{++}$ \cite{DBLP:conf/sigmod/LiQYM16} improves DBPF by quickly finding an initial tree, and then progressively refining it until an exact GST is found. The key techniques of PrunedDP$^{++}$ are optimal tree decomposition, conditional tree merging, and A*-search. However, the approximate GSTs returned by PrunedDP$^{++}$ are of low quality, and the computation of exact solutions is too slow for large graphs, as noted in \cite{DBLP:conf/www/Shi0K20}.

KeyKG$^{+}$ \cite{DBLP:conf/www/Shi0K20} is the most recent work on keyword search over knowledge graphs under GST semantics. It first selects a vertex combination, with one vertex corresponding to each keyword, according to their overall closeness, and then generates an approximate ST. In both steps, it uses the PLL index to speed up the shortest distance/path computation. In the computation of the PLL, the vertices are ordered using (approximate) betweenness centrality. However, as shown in our experiments, the accuracy of an ST returned by KeyKG$^+$ is lower than other systems. Besides, the cost of the betweenness computation is very high. For example, it takes more than 24 hours to process a moderate-sized graph with the state-of-the-art implementation from \cite{DBLP:conf/ssdbm/AlGhamdiJSK17}. Moreover, the size of the PLL index grows exponentially in large dense graphs, which makes KeyKG$^+$ inapplicable to large graphs such as DBPedia. 
Besides, KeyKG$^{+}$ assumes that the keywords cannot be mapped to edges, and the authors argue that each edge label can be replaced with a two-edge path going through a vertex with the same label. However, this strategy will significantly increase the graph size, making the computation even slower.

\vspace{-2mm}
\begin{table}[h] \centering
\small
\setlength\tabcolsep{5pt}
\begin{tabular}{@{}lc@{}}\toprule
&   Time Complexity  \\ \bottomrule
BANKS II  & $\mathcal{O}(|V|^2log|V| + |V||E|)$    \\
BLINKS   & Depends on graph partitioning      \\
DPBF  &  $\mathcal{O}((3^k + 2^k(k + log|V|)|V| + |E|))$ \\
SketchLS   & $\mathcal{O}(k^3log^3|V|)$     \\
KeyKG$^{+}$   & $\mathcal{O}(k|V|^2 + k^3|V| )$    \\
RECON  &   $\mathcal{O}((1+log|V| + k^2log^2|V|)log|V|)$ \\
\bottomrule
\end{tabular}
\caption{Time complexity of the systems involved in our experiments. }
\label{tab:offline}
\end{table}
\vspace{2mm}

Compared with the above, our system can find highly accurate approximate STs and scale to large, dense graphs. In Table \ref{tab:offline}, we summarize the time complexity of the systems involved in our experiments. For more details regarding the complexity analysis of RECONO, please rerfer to Section \ref{sec:scg_construction}.

\stitle{Other KG Keyword Search Systems.}
EASE \cite{DBLP:conf/sigmod/LiOFWZ08} proposes a radius-based index to support efficient \emph{Steiner Graph} searching on unstructured, semi-structured and structured datasets. However, \cite{DBLP:journals/pvldb/KargarA11,DBLP:conf/cikm/CoffmanW10,DBLP:conf/cikm/GubichevN12} point out the poor scalability of EASE in graphs with millions of vertices, since the index is based on the computation of a high-order adjacency matrix. \cite{DBLP:journals/tkde/LeLKD14} is a keyword search system over RDF graphs using the same semantics as BLINKs. It generates a type-level summary to accelerate query evaluation. The authors generate SPARQL queries based on the primitive returned query pattern, and use a triple store to support query answering. \cite{DBLP:conf/cikm/Han0YZ17} targets question-answering-style keyword search on large KGs. The authors build a bipartite graph from the original RDF dataset and convert the keyword search problem to a graph assembly problem. The final output is the answers of the generated SPAPRQL query. Wikisearch \cite{DBLP:conf/icde/YangAJTW19} proposes an approach for keyword search over KGs where the search result is a so-called {\em central graph} instead of a GST. To achieve low-latency (sub-second delay) query answering, it exploits the parallelism capability of modern hardware (e.g., GPUs). The query semantics of these systems are different from that of RECON and, to the best of our knowledge, no previous work has exploited ontology to improve search quality.

\vspace{-1mm}
\section{Conclusion}
\label{sec:conclusion}
In this paper, we focus on efficiently finding minimum connected subgraphs over KGs with ontological reasoning support. Moreover, we adopt a landmark-based index with sketch patch-up and path selection to efficiently identify the corresponding MCS of an input keywords query, which represents the backbone of a SPARQL query. Based on a set of comprehensive experiments, we show that our experiments demonstrate that RECON can achieve high accuracy with instantaneous response over large graphs (hundreds of millions edges) for ST computation, and for practical queries, the MCS computation can be done at a small additional cost.

\balance
\bibliographystyle{abbrv}
{\footnotesize\bibliography{main}}

\begin{thebibliography}{10}

\bibitem{DBLP:conf/sigmod/AkibaIY13}
T.~Akiba, Y.~Iwata, and Y.~Yoshida.
\newblock Fast exact shortest-path distance queries on large networks by pruned
  landmark labeling.
\newblock In {\em {SIGMOD}}, pages 349--360, 2013.

\bibitem{DBLP:conf/ssdbm/AlGhamdiJSK17}
Z.~AlGhamdi, F.~T. Jamour, S.~Skiadopoulos, and P.~Kalnis.
\newblock A benchmark for betweenness centrality approximation algorithms on
  large graphs.
\newblock In {\em SSDBM}, pages 6:1--6:12, 2017.

\bibitem{DBLP:conf/semweb/AuerBKLCI07}
S.~Auer, C.~Bizer, G.~Kobilarov, J.~Lehmann, R.~Cyganiak, and Z.~G. Ives.
\newblock Dbpedia: {A} nucleus for a web of open data.
\newblock In {\em {ISWC}}.

\bibitem{DBLP:conf/dlog/2003handbook}
F.~Baader, D.~Calvanese, D.~L. McGuinness, D.~Nardi, and P.~F.
  Patel{-}Schneider, editors.
\newblock {\em The Description Logic Handbook: Theory, Implementation, and
  Applications}.
\newblock Cambridge University Press, 2003.

\bibitem{DBLP:conf/icde/BhalotiaHNCS02}
G.~Bhalotia, A.~Hulgeri, C.~Nakhe, S.~Chakrabarti, and S.~Sudarshan.
\newblock Keyword searching and browsing in databases using {BANKS}.
\newblock In {\em ICDE}, pages 431--440, 2002.

\bibitem{DBLP:conf/sigmod/BollackerEPST08}
K.~D. Bollacker, C.~Evans, P.~Paritosh, T.~Sturge, and J.~Taylor.
\newblock Freebase: a collaboratively created graph database for structuring
  human knowledge.
\newblock In J.~T. Wang, editor, {\em {SIGMOD}}, pages 1247--1250, 2008.

\bibitem{DBLP:journals/semweb/CalvaneseCKKLRR17}
D.~Calvanese, B.~Cogrel, S.~Komla{-}Ebri, R.~Kontchakov, D.~Lanti, M.~Rezk,
  M.~Rodriguez{-}Muro, and G.~Xiao.
\newblock Ontop: Answering {SPARQL} queries over relational databases.
\newblock {\em J. Web Semant.}, 8(3):471--487, 2017.

\bibitem{DBLP:journals/mp/ChopraR94}
S.~Chopra and M.~R. Rao.
\newblock The steiner tree problem {I:} formulations, compositions and
  extension of facets.
\newblock {\em Math. Program.}, 64:209--229, 1994.

\bibitem{DBLP:conf/cikm/CoffmanW10}
J.~Coffman and A.~C. Weaver.
\newblock A framework for evaluating database keyword search strategies.
\newblock In {\em {CIKM}}, pages 729--738, 2010.

\bibitem{DBLP:journals/jal/CowenW04}
L.~Cowen and C.~G. Wagner.
\newblock Compact roundtrip routing in directed networks.
\newblock {\em J. Algorithms}, 50(1):79--95, 2004.

\bibitem{DBLP:conf/icde/DingYWQZL07}
B.~Ding, J.~X. Yu, S.~Wang, L.~Qin, X.~Zhang, and X.~Lin.
\newblock Finding top-k min-cost connected trees in databases.
\newblock In {\em {ICDE}}, pages 836--845, 2007.

\bibitem{DBLP:journals/ipl/EfraimidisS06}
P.~S. Efraimidis and P.~G. Spirakis.
\newblock Weighted random sampling with a reservoir.
\newblock {\em Inf. Process. Lett.}, 97(5):181--185, 2006.

\bibitem{sparqlProtocol}
L.~Feigenbaum, G.~Williams, K.~Clark, and E.~Torres.
\newblock Sparql 1.1 protocol, 2013.

\bibitem{DBLP:conf/cikm/GubichevBSW10}
A.~Gubichev, S.~J. Bedathur, S.~Seufert, and G.~Weikum.
\newblock Fast and accurate estimation of shortest paths in large graphs.
\newblock In {\em {CIKM}}, pages 499--508, 2010.

\bibitem{DBLP:conf/cikm/GubichevN12}
A.~Gubichev and T.~Neumann.
\newblock Fast approximation of steiner trees in large graphs.
\newblock In {\em {CIKM}}, pages 1497--1501, 2012.

\bibitem{DBLP:journals/ws/GuoPH05}
Y.~Guo, Z.~Pan, and J.~Heflin.
\newblock {LUBM:} {A} benchmark for {OWL} knowledge base systems.
\newblock {\em J. Web Semant.}, 3(2-3):158--182, 2005.

\bibitem{DBLP:conf/cikm/Han0YZ17}
S.~Han, L.~Zou, J.~X. Yu, and D.~Zhao.
\newblock Keyword search on {RDF} graphs - {A} query graph assembly approach.
\newblock In {\em {CIKM}}, pages 227--236, 2017.

\bibitem{DBLP:conf/sigmod/HeWYY07}
H.~He, H.~Wang, J.~Yang, and P.~S. Yu.
\newblock {BLINKS:} ranked keyword searches on graphs.
\newblock In {\em {SIGMOD}}, pages 305--316, 2007.

\bibitem{DBLP:journals/tkde/Hu0YWZ18}
S.~Hu, L.~Zou, J.~X. Yu, H.~Wang, and D.~Zhao.
\newblock Answering natural language questions by subgraph matching over
  knowledge graphs.
\newblock {\em {TKDE}}, 30(5):824--837, 2018.

\bibitem{DBLP:conf/vldb/KacholiaPCSDK05}
V.~Kacholia, S.~Pandit, S.~Chakrabarti, S.~Sudarshan, R.~Desai, and
  H.~Karambelkar.
\newblock Bidirectional expansion for keyword search on graph databases.
\newblock In {\em {VLDB}}, pages 505--516, 2005.

\bibitem{DBLP:journals/pvldb/KargarA11}
M.~Kargar and A.~An.
\newblock Keyword search in graphs: Finding r-cliques.
\newblock {\em Proc. {VLDB} Endow.}, 4(10):681--692, 2011.

\bibitem{DBLP:conf/icde/KasneciRSSW09}
G.~Kasneci, M.~Ramanath, M.~Sozio, F.~M. Suchanek, and G.~Weikum.
\newblock {STAR:} steiner-tree approximation in relationship graphs.
\newblock In {\em {ICDE}}, pages 868--879, 2009.

\bibitem{DBLP:journals/jsac/KatevenisSC91}
M.~Katevenis, S.~Sidiropoulos, and C.~Courcoubetis.
\newblock Weighted round-robin cell multiplexing in a general-purpose {ATM}
  switch chip.
\newblock {\em {IEEE} J. Sel. Areas Commun.}, 9(8):1265--1279, 1991.

\bibitem{DBLP:journals/tkde/LeLKD14}
W.~Le, F.~Li, A.~Kementsietsidis, and S.~Duan.
\newblock Scalable keyword search on large {RDF} data.
\newblock {\em {TKDE}}, 26(11):2774--2788, 2014.

\bibitem{DBLP:conf/sigmod/LiOFWZ08}
G.~Li, B.~C. Ooi, J.~Feng, J.~Wang, and L.~Zhou.
\newblock {EASE:} an effective 3-in-1 keyword search method for unstructured,
  semi-structured and structured data.
\newblock In {\em {SIGMOD}}, pages 903--914, 2008.

\bibitem{DBLP:conf/sigmod/LiQYM16}
R.~Li, L.~Qin, J.~X. Yu, and R.~Mao.
\newblock Efficient and progressive group steiner tree search.
\newblock In F.~{\"{O}}zcan, G.~Koutrika, and S.~Madden, editors, {\em
  {SIGMOD}}, pages 91--106, 2016.

\bibitem{DBLP:conf/sigmod/LiQQ0CL19}
W.~Li, M.~Qiao, L.~Qin, Y.~Zhang, L.~Chang, and X.~Lin.
\newblock Scaling distance labeling on small-world networks.
\newblock In {\em {SIGMOD}}, pages 1060--1077. {ACM}, 2019.

\bibitem{DBLP:conf/cidr/MahdisoltaniBS15}
F.~Mahdisoltani, J.~Biega, and F.~M. Suchanek.
\newblock {YAGO3:} {A} knowledge base from multilingual wikipedias.
\newblock In {\em {CIDR}}, 2015.

\bibitem{DBLP:journals/pvldb/NeumannW08}
T.~Neumann and G.~Weikum.
\newblock {RDF-3X:} a risc-style engine for {RDF}.
\newblock {\em Proc. {VLDB} Endow.}, 1(1):647--659, 2008.

\bibitem{DBLP:journals/is/PengZQ17}
P.~Peng, L.~Zou, and Z.~Qin.
\newblock Answering top-k query combined keywords and structural queries on
  {RDF} graphs.
\newblock {\em Inf. Syst.}, 67:19--35, 2017.

\bibitem{DBLP:conf/cikm/PotamiasBCG09}
M.~Potamias, F.~Bonchi, C.~Castillo, and A.~Gionis.
\newblock Fast shortest path distance estimation in large networks.
\newblock In D.~W. Cheung, I.~Song, W.~W. Chu, X.~Hu, and J.~J. Lin, editors,
  {\em {CIKM}}, pages 867--876, 2009.

\bibitem{DBLP:conf/wsdm/SarmaGNP10}
A.~D. Sarma, S.~Gollapudi, M.~Najork, and R.~Panigrahy.
\newblock A sketch-based distance oracle for web-scale graphs.
\newblock In {\em {WSDM}}, pages 401--410, 2010.

\bibitem{DBLP:conf/www/Shi0K20}
Y.~Shi, G.~Cheng, and E.~Kharlamov.
\newblock Keyword search over knowledge graphs via static and dynamic hub
  labelings.
\newblock In Y.~Huang, I.~King, T.~Liu, and M.~van Steen, editors, {\em {WWW}},
  pages 235--245, 2020.

\bibitem{DBLP:journals/corr/SpeerCH16}
R.~Speer, J.~Chin, and C.~Havasi.
\newblock {ConceptNet} 5.5: An open multilingual graph of general knowledge.
\newblock {\em CoRR}, abs/1612.03975, 2016.

\bibitem{DBLP:conf/cikm/TretyakovAGVD11}
K.~Tretyakov, A.~Armas{-}Cervantes, L.~Garc{\'{\i}}a{-}Ba{\~{n}}uelos, J.~Vilo,
  and M.~Dumas.
\newblock Fast fully dynamic landmark-based estimation of shortest path
  distances in very large graphs.
\newblock In {\em {CIKM}}, pages 1785--1794, 2011.

\bibitem{DBLP:conf/www/Vrandecic12}
D.~Vrandecic.
\newblock Wikidata: a new platform for collaborative data collection.
\newblock In A.~Mille, F.~L. Gandon, J.~Misselis, M.~Rabinovich, and S.~Staab,
  editors, {\em {WWW}}, pages 1063--1064, 2012.

\bibitem{DBLP:books/cu/WF1994}
S.~Wasserman and K.~Faust.
\newblock {\em Social Network Analysis: Methods and Applications}.
\newblock Cambridge University Press, 1994.

\bibitem{DBLP:journals/pvldb/WeissKB08}
C.~Weiss, P.~Karras, and A.~Bernstein.
\newblock Hexastore: sextuple indexing for semantic web data management.
\newblock {\em Proc. {VLDB} Endow.}, 1(1):1008--1019, 2008.

\bibitem{DBLP:conf/acl/WuP94}
Z.~Wu and M.~S. Palmer.
\newblock Verb semantics and lexical selection.
\newblock In {\em Annual Meeting of the Association for Computational
  Linguistics}, pages 133--138, 1994.

\bibitem{DBLP:conf/icde/YangAJTW19}
Y.~Yang, D.~Agrawal, H.~V. Jagadish, A.~K.~H. Tung, and S.~Wu.
\newblock An efficient parallel keyword search engine on knowledge graphs.
\newblock In {\em {ICDE}}, pages 338--349, 2019.

\bibitem{DBLP:journals/vldb/ZouCOZ12}
L.~Zou, L.~Chen, M.~T. {\"{O}}zsu, and D.~Zhao.
\newblock Answering pattern match queries in large graph databases via graph
  embedding.
\newblock {\em {VLDB} J.}, 21(1):97--120, 2012.

\bibitem{DBLP:conf/sigmod/ZouHWYHZ14}
L.~Zou, R.~Huang, H.~Wang, J.~X. Yu, W.~He, and D.~Zhao.
\newblock Natural language question answering over {RDF:} a graph data driven
  approach.
\newblock In C.~E. Dyreson, F.~Li, and M.~T. {\"{O}}zsu, editors, {\em
  {SIGMOD}}, pages 313--324, 2014.

\end{thebibliography}

\end{sloppypar}
\end{document}